\theoremstyle{plain}
\newtheorem{theorem}{Theorem}[section]
\newtheorem{lemma}[theorem]{Lemma}
\newtheorem{definition}[theorem]{Definition}
\newtheorem{corollary}[theorem]{Corollary}
\theoremstyle{definition}
\newtheorem{example}{Example}
\newcommand{\mms}{\boldsymbol{\mu}}
\newcommand{\mysetminusD}{\hbox{\tikz{\draw[line width=0.6pt,line cap=round] (3pt,0) -- (0,6pt);}}}
\newcommand{\mysetminusT}{\mysetminusD}
\newcommand{\mysetminusS}{\hbox{\tikz{\draw[line width=0.45pt,line cap=round] (2pt,0) -- (0,4pt);}}}
\newcommand{\mysetminusSS}{\hbox{\tikz{\draw[line width=0.4pt,line cap=round] (1.5pt,0) -- (0,3pt);}}}
\newcommand{\mysetminus}{\mathbin{\mathchoice{\mysetminusD}{\mysetminusT}{\mysetminusS}{\mysetminusSS}}}
\title{On Truthful Mechanisms for \\ Maximin Share Allocations\thanks{A preliminary conference version appeared in IJCAI 2016.}}
\author{
Georgios Amanatidis 
\and
Georgios Birmpas 
\and
Evangelos Markakis \and 
\normalsize{Athens University of Economics and Business, Department of Informatics.} \\
\normalsize{\{gamana, gebirbas, markakis\}@aueb.gr}
}
\begin{document}

\maketitle

\begin{abstract}
We study a fair division problem with indivisible items, namely the computation of maximin share allocations. 
Given a set of $n$ players, the maximin share of a single player is the best she can guarantee to herself, if she
would partition the items in any way she prefers, into $n$ bundles,
and then receive her least desirable bundle. The objective then is
to find an allocation, so that each player is guaranteed her maximin share.

Previous works have studied this problem mostly algorithmically, providing constant factor approximation algorithms. In this work we embark on a mechanism design approach and investigate the existence of truthful mechanisms. We propose three models regarding the information that the mechanism attempts to elicit from the players, based on the cardinal and ordinal representation of preferences. We establish positive and negative (impossibility) results for each model and highlight the limitations imposed by truthfulness on the approximability of the problem.
Finally, we pay particular attention to the case of two players, which already leads to challenging questions.
\end{abstract}

\section{Introduction}
\label{sec:intro}

We study the design of mechanisms for a fair division problem with indivisible items. 
The objective in fair division is to allocate a set of resources to a set of
players in a way that leaves everyone satisfied, according to their own preferences.
Over the past decades, several fairness concepts have been proposed and the area gradually
gained popularity in computer science as well, since most of the questions are inherently
algorithmic. We refer the reader to the upcoming survey \cite{Procaccia14-survey} for more recent results and to the classic textbooks \cite{BT96,RW98} for an overview of the area.

Our focus here is on the concept of maximin share allocations, which has already attracted a lot of attention ever since it was introduced by Budish \cite{Budish11}.
The rationale for this notion is as follows: suppose that a player, say player $i$, is asked to partition
the items into $n$ bundles and then the rest of the players select a bundle before $i$. In
the worst case, player $i$ will be left with her least valuable subset. Hence, a risk-averse
player would choose a partition that maximizes the minimum value of a bundle. 
This value is called the maximin share of agent $i$ and the goal then is to
find an allocation where every person receives at least her maximin share.  

The existence of maximin share allocations is not always guaranteed under indivisible items. This has led to a series of works that have either established approximation algorithms (i.e., every player receives an approximation of her own maximin share) or have resolved special cases of the problem; see our Related Work section. Currently, the best algorithms we are aware of achieve an approximation ratio of $2/3$ \cite{PW14,AMNS15}, and it is still a challenging open problem if one can do better. 

These previous works, apart from examining the existence of maximin share allocations, have studied the problem from an algorithmic point of view, and one aspect that has not been addressed so far is incentive compatibility. Players may have incentives to misreport their valuation functions and in fact, the proposed approximation algorithms are not truthful. Is it possible then to have truthful algorithms with the same approximation guarantee? Truthfulness is a demanding constraint especially in settings without monetary transfers, and our goal is to explore the effects on the approximability of the problem as we impose such a constraint.   

\vspace{8 pt}\noindent {\bf Contribution.}
We investigate the existence of truthful deterministic mechanisms for constructing approximate or exact maximin share allocations. In doing so, we consider three models regarding the information that the mechanism attempts to elicit from the players. 

The first one is the more straightforward approach where players have to submit their entire additive valuation function to the mechanism. We then move to mechanisms where the manipulating power of the players is restricted by the type of information that they are allowed to submit. Namely, in our second model players only submit their ranking over the items, motivated by the fact that many mechanisms in the fair division literature fall within this class. Finally, in our third model we assume the mechanism designer knows the ranking of each player over the items and asks for a valuation function consistent with the ranking. This can be appropriate for settings where the items are distinct enough to extract a ranking, or when the players are known to belong to specific behavioral types.   
For each of these models, we establish positive and negative (impossibility) results and highlight the differences and similarities between them.  
Our results provide a clear separation between the guarantees achievable by truthful and non-truthful mechanisms. We also note that all our positive results yield polynomial time algorithms, whereas the impossibility results are independent of the running time of an algorithm.  
Moreover, we pay particular attention to the case of two players, which already gives rise to non-trivial questions, even with a small number of items. 

Finally, motivated by the lack of positive results for deterministic mechanisms, we analyze the performance of a very simple truthful randomized mechanism.
For a wide range of distributions for the values of the items, we show that we have an arbitrarily good approximation when the number of the items is large enough.

 


\vspace{8 pt}\noindent {\bf Related Work.} The notion of maximin share allocations was introduced by Budish \cite{Budish11} (building on concepts of Moulin \cite{Moulin90}), and later on defined by Bouveret and Lema{\^i}tre \cite{BL14} in the setting that we study here. Both experimental and theoretical evidence, see \cite{BL14,KPW16,AMNS15}, indicate that such allocations do exist almost always. As for computation, a $2/3$ approximation algorithm has been established by Procaccia and Wang \cite{PW14} and later on, a polynomial time algorithm with the same guarantee was provided by Amanatidis et al.~\cite{AMNS15}.  

Regarding incentive compatibility, we are not aware of any prior work that addresses the design of truthful mechanisms for maximin share allocations. There have been quite a few works on mechanisms for other fairness notions, see among others \cite{CKKK09,CLPP13,LMMS}. Parts of our work are motivated by the question of what is the power of cardinal information versus ordinal information. We note that exploring what can be done using only ordinal information has been recently studied for other optimization problems too, (see \cite{AS16}). A popular class of mechanisms based only on ordinal preferences is the class induced by ``picking sequences'', introduced by Kohler and Chandrasekaran \cite{KC71}; see also references in Section \ref{sec:rankings}. We  make use of such algorithms to establish some of our positive results.   

Finally, we should note that it has been customary in the context of fair division to not allow side payments, hence these are mechanism design problems without money. We stick to the same approach here.

\section{Preliminaries}
\label{sec:prelim}
For any $k\in\mathbb N$, we denote by $[k]$ the set $\{1,\ldots ,k\}$.
Let $N = [n]$ be a set of $n$ players and $M = [m]$ be a set of indivisible items.
We assume each player has an additive valuation function $v_i(\cdot)$ over the items, and we will write $v_{ij}$ instead of $v_i(\{j\})$. For $S\subseteq M$, we let $v_i(S) = \sum_{j\in S} v_{ij}$. 
An allocation of $M$ to the $n$ players is a partition,
$T = (T_1,...,T_n)$, where $T_i\cap T_j = \emptyset$ and $\bigcup_i T_i = M$.
Let $\Pi_n(M)$ be the set of all partitions of a set $M$ into $n$ bundles.

\begin{definition}
Given $n$ players, and a set  $M$ of items, the $n$-maximin share of a player $i$ with respect to $M$ is:
\[ \mms_i(n, M) = \displaystyle\max_{T\in\Pi_n(M)} \min_{T_j\in T} v_i(T_j)\,.\]
\end{definition}

When it is clear from the context what $n, M$ are, we will simply write $\mms_i$ instead of $\mms_i(n, M)$. 
The solution concept defined in \cite{Budish11} asks for a partition that gives each player her maximin share.
\begin{definition}
Given $n$ players and a set of items $M$, a partition 
$T = (T_1,...,T_n) \in \Pi_n(M)$ is called a maximin share allocation 
if $v_i(T_i)\geq \mms_i\,$, for every $i\in [n]$. If $v_i(T_i)\geq \rho\cdot\mms_i\,$, $\forall i\in [n]$, with $\rho\leq 1$, 
then $T$ is called a $\rho$-approximate maximin share allocation.
\end{definition} 
It can be easily seen that this is a relaxation of the classic notion of proportionality. 

\begin{example}
\label{ex:mms}
Consider an instance with 3 players and 5 items: 
\begin{center}
\begin{tabular}{@{} *6l @{}}    
 & $a$ & $b$ & $c$ & $d$ & $e$ \\\toprule
\ Player 1 & $1/2$ & $1/2$ & $1/3$ & $1/3$ & $1/3\ $ \\ 
\ Player 2 & $1/2$ & $1/4$ & $1/4$ & $1/4$ & $0\ $ \\ 
\ Player 3 & $1/2$ & $1/2$ & 1 & $1/2$ & $1/2\ $ \\\bottomrule
\end{tabular}
\end{center}\vspace{5pt}
%
If $M = \{a, b, c, d, e\}$ is the set of items, one can see that $\mms_1(3, M) = 1/2$, $\mms_2(3, M) = 1/4$, $\mms_3(3, M) = 1$. For player $1$, no matter how she partitions the items into three bundles, the worst bundle will be worth at most $1/2$ for her. 
Similarly, player $3$ can guarantee a value of $1$ (which is best possible as it is equal to $v_3(M)/n$) by the partition $(\{a, b\}, \{c\}, \{d, e\})$. Note that this instance admits a maximin share allocation, e.g., $(\{a\}, \{b, c\}, \{d, e\})$, and in fact this is not unique.

Note also that if we remove some player, say player 2, the maximin values for the other two players increase. E.g., $\mms_1(2, M) = 1$, achieved by the partition $(\{a, b\}, \{c, d, e\})$. Similarly, $\mms_3(2, M) = 3/2$.  
\end{example}

\subsection{Mechanism design aspects} 

Following most of the fair division literature, our focus is on mechanism design without money, i.e., we do not allow side payments to the players. Then, the standard way to define truthfulness  is as follows:  
an instance of the problem can be described as an $n\times m$ valuation matrix $V=[v_{ij}]$, as in Example \ref{ex:mms} above. 
For any mechanism $A$, we denote by $A(V)=(A_1(V), \ldots, A_n(V))$ the allocation output by $A$ on input $V$.
Also, let $\mathbf v_i$ denote the $i$th row of $V$, and $V_{-i}$ denote the remaining matrix. 
Finally, let $(\mathbf v_i', V_{-i})$ be the matrix we get by changing the $i$th 
row of $V$ from $\mathbf v_i$ to $\mathbf v_i'$. 

\begin{definition}
\label{def:truthful}
A mechanism $A$ is truthful if for any instance $V$, any player $i$, and any  possible declaration $\mathbf v_i'$ of  $i$:
$v_i(A_i(V)) \ge v_i(A_i(\mathbf v_i', V_{-i}))$.
\end{definition}

Obtaining a good understanding of truthful mechanisms and their performance for other fairness notions has been a difficult problem; see among others \cite{LMMS,CKKK09} for approximating minimum envy with truthful mechanisms.
The difficulty is that an algorithm that uses an $m$-dimen\-sion\-al vector of values for each player, can create many subtle ways for players to benefit by misreporting. One can try to alleviate this by restricting the type of information that is requested from the players.  As a first instantiation of this, we note that many mechanisms in the literature end up utilizing only the ranking of each player for the items, and not the entire valuation function, (see our discussion in Section \ref{sec:rankings} and references therein). This yields simpler, intuitive mechanisms, at the expense of possibly sacrificing performance, since the mechanism uses less information.  
As a second instantiation, one can exploit information that could be available to the mechanism so as to restrict the allowed valuations. For example, in some scenarios, it is realistic to 
assume that the ranking of each player over the items is public knowledge. If the items are distinct enough, it is possible that one could extract such information (a special case is that of full correlation, considered in \cite{BF02,BL11}, where all players have the same ranking). Therefore, the players in such cases can only submit values that agree with their (known) ranking.

Motivated by the above considerations, we study the following three models: 

\begin{itemize}[leftmargin=20pt] 
\item The Cardinal or Standard Model. Every player submits a valuation function, without any restrictions. To represent the input of player $i$, we fix an ordering of the items and write the corresponding vector of values as $\mathbf v_i=[v_{i1}, \allowbreak v_{i2}, \ldots, v_{im}]$.
\item The Ordinal Model. Here, an instance is again determined by a matrix $V$, however a mechanism only asks players to submit a ranking on the items. 
Note that Definition \ref{def:truthful} of truthfulness needs to be modified accordingly. That is, 
let $\succeq_i$ be any total order consistent with $\mathbf v_i$ (there may be many in case of ties). 
A mechanism  is truthful if for any tuple of rankings for the other players, denoted by $\succeq_{-i}$, and any ranking $\succeq_i'\,$: $v_i(A_i(\succeq_i, \succeq_{-i})) \ge v_i(A_i(\succeq_i', \succeq_{-i}))$.
\item The Public Rankings Model. Now, the ranking of each player is known to the mechanism, say it is $\succeq_i$. Hence, each player is asked to submit a valuation function consistent with $\succeq_i$.   
\end{itemize}

It is not hard to see how the different scenarios we investigate are related to 
each other; this is summarized in the following lemma. 

\begin{lemma}\label{lem:implications}
\emph{(i)} Assume there exists a truthful $\rho$-approximation mechanism $A$ in the cardinal model. 
Then, $A$ can be efficiently turned into a truthful $\rho$-approximation 
mechanism for the public rankings model.\\
\emph{(ii)} Assume there exists a truthful $\rho$-approximation mechanism $A$ for the ordinal model.  
Then, $A$ can be efficiently turned into a truthful $\rho$-approximation 
mechanism for the cardinal model.
\end{lemma}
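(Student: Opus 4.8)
The plan is to treat both directions as instances of a single principle: truthfulness and the approximation ratio are preserved when we either restrict the reporting domain (part (i)) or pre-process a richer report into the format expected by an existing mechanism (part (ii)). Neither direction requires designing a new allocation rule; in both cases I would reuse $A$ essentially verbatim and argue that its guarantees carry over. The arguments are structural, so I expect the work to lie entirely in reading the two truthfulness definitions carefully rather than in any computation.

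For part (i), I would simply run $A$ on the public rankings instance. The only difference between the cardinal model and the public rankings model is that in the latter the set of valuations a player may report is restricted to those consistent with the known ranking $\succeq_i$. First I would observe that $A$ is a $\rho$-approximation on \emph{every} input, hence in particular on inputs drawn from this restricted domain, so the approximation guarantee is immediate. For truthfulness, the key point is that a player's true valuation is itself consistent with $\succeq_i$ (since $\succeq_i$ is the player's true ranking), so the honest report remains available; moreover, any deviation $\mathbf v_i'$ in the public rankings model is also a valid deviation in the cardinal model, where $A$ is already truthful. Restricting the deviation space can only preserve the honest report's optimality, so truthfulness is inherited.

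For part (ii), I would build a cardinal mechanism $B$ as follows: on input $V$, extract from each row $\mathbf v_i$ a total order $\succeq_i$ consistent with $\mathbf v_i$ (breaking ties by some fixed rule), and output $A(\succeq_1,\ldots,\succeq_n)$. The crucial interpretive step is to unpack what the ordinal approximation guarantee quantifies over: since $A$ sees only rankings but must deliver a $\rho$-approximation for the true cardinal values, and since many valuation profiles induce the same ranking profile, $A(\succeq)$ must be a $\rho$-approximate maximin share allocation for every valuation profile consistent with $\succeq$. As the true profile $V$ is one such profile, $B(V)$ is a $\rho$-approximate allocation. For truthfulness, a deviation by player $i$ to $\mathbf v_i'$ merely changes the extracted order to some $\succeq_i'$, and the resulting allocation is exactly $A(\succeq_i',\succeq_{-i})$; by the ordinal truthfulness of $A$, applied with the honest consistent order $\succeq_i$, this yields no higher value for $i$ than $A(\succeq_i,\succeq_{-i}) = B_i(V)$.

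The steps are elementary, so the main thing to get right is a precise matching of the two truthfulness definitions and of the worst-case meaning of the ordinal guarantee. The one subtlety I would flag is tie-breaking in part (ii): because the ordinal truthfulness definition quantifies over \emph{every} total order $\succeq_i$ consistent with $\mathbf v_i$, whatever fixed tie-breaking rule $B$ uses selects one admissible such order, and the required inequality holds for it; thus the reduction is unaffected by ties. Finally, both reductions only prepend an extraction/restriction step to $A$, so each is clearly efficient, settling the ``efficiently'' clause as well.
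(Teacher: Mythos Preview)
Your proposal is correct and follows essentially the same approach as the paper: for (i) you reuse $A$ directly and argue truthfulness is preserved under the restricted report space, and for (ii) you pre-process cardinal reports into rankings and invoke $A$'s ordinal guarantees. The only cosmetic difference is that in (i) the paper explicitly handles (and punishes) reports inconsistent with the public ranking, whereas you treat the model as already restricting the report space; given the model's definition, your reading is valid, and your observation that restricting the deviation space can only preserve truthfulness captures the same idea.
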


\begin{proof}
\emph{(i)} The mechanism $B$ for the public rankings model will take as input the values ${\mathbf v}'_i=[v'_{i1}, \ldots, v'_{im}], \forall i\in[n]$ as reported by the players, together with the actual (publicly known) rankings $\succeq_i, \forall i\in[n]$. If for some $j$ the reported ${\mathbf v}'_j$ is not consistent  with $\succeq_j$, then player $j$ is ignored by the mechanism. For all the consistent players $B$ runs $A$ on the same inputs and outputs the same allocation as $A$.
Clearly, no player has incentive to be inconsistent with her ranking. Given that, the truthfulness of $B$ follows from the truthfulness of $A$, as does the approximation ratio. \\
\emph{(ii)} The mechanism $B$ for the cardinal model will take as input the values ${\mathbf v}'_i=[v'_{i1}, \ldots, v'_{im}], \allowbreak \forall i\in[n]$ as reported by the players, and will produce the corresponding rankings $\succeq'_i, \forall i\in[n]$. Then, $B$ runs $A$ using as input the rankings $\succeq'_i, \forall i\in[n]$ and outputs the same allocation as $A$.
It is clear that no player has incentive to misreport her values without changing her actual ranking. Given that, the truthfulness of $B$ follows from the truthfulness of $A$, as does the approximation ratio.
\end{proof}

\section{The Cardinal Model}
\label{sec:values}

As already alluded to, designing mechanisms that utilize the values submitted by each player, so as to achieve a good approximation and at the same time induce truthful behavior, is a very challenging problem.
This is true even in the case of $n=2$ players. Therefore, we start first with a rather weak result for general $n$ and $m$, and then move on to discuss the case of two players.
The main message from this section (Theorem \ref{thm:val-1/2}) is that there is a clear separation, regarding the approximation guarantees of truthful and non-truthful algorithms. 

\begin{theorem}\label{cor:val-gen}
For any $n\ge 2, m\ge 1$, there is a truthful 
$1/\left\lfloor \frac{\max\{2,  m-n+2\}}{2}\right\rfloor$-approximation 
mechanism for the cardinal model.
\end{theorem}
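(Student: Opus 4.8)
The plan is to give a very simple serial-dictatorship-style mechanism that is truthful almost for free, and then to put all the work into the approximation analysis. Fix the order $1,2,\dots,n$. Agents $1,\dots,n-1$ are served first, each taking her most valuable item among those still available (one item each, in this order), and the last agent $n$ then receives the whole remaining bundle of $m-n+1$ items. When $m<n$ there is nothing to prove, since every partition into $n$ bundles leaves an empty bundle and hence $\mms_i=0$ for all $i$; so I assume $m\ge n$ and set $k=\big\lfloor \max\{2,m-n+2\}/2\big\rfloor=\lfloor (m-n+2)/2\rfloor$.

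Truthfulness I would argue directly (no need even for Lemma~\ref{lem:implications}). Each of the first $n-1$ agents ends up with a single item, and the item she is offered depends only on the choices of the agents before her, which in turn do not depend on her own report; taking her genuinely most valuable available item is therefore a dominant strategy, and any misreport can only hand her a less valuable item. Agent $n$ receives exactly the set of items left untouched by the first $n-1$ agents, a set that is independent of her own declared values, so she can never gain by misreporting either.

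The heart of the argument is the approximation guarantee, for which I would sort each agent's values as $v_{i(1)}\ge v_{i(2)}\ge\cdots\ge v_{i(m)}$ and prove, for every $i\le n-1$,
\[ \mms_i \;\le\; \Big\lfloor \tfrac{m-i+1}{\,n-i+1\,}\Big\rfloor\, v_{i(i)}. \]
The proof is a pigeonhole argument on $i$'s own maximin partition $(T_1,\dots,T_n)$: the top $i-1$ items of agent $i$ occupy at most $i-1$ bundles, so at least $n-i+1$ bundles avoid all of them and consist solely of items of value at most $v_{i(i)}$. Each such ``low'' bundle still has value at least $\mms_i$, hence contains at least $\lceil \mms_i/v_{i(i)}\rceil$ items; since these low bundles are disjoint and drawn from the $m-i+1$ items ranked $i$ or lower, we get $(n-i+1)\lceil \mms_i/v_{i(i)}\rceil \le m-i+1$, which rearranges to the displayed inequality (the degenerate case $v_{i(i)}=0$ forces $\mms_i=0$ and is trivial). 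Now I would observe that $\tfrac{m-i+1}{n-i+1}=1+\tfrac{m-n}{n-i+1}$ is nondecreasing in $i$ for $m\ge n$, so over $i\in\{1,\dots,n-1\}$ it is maximized at $i=n-1$, where it equals $(m-n+2)/2$, and hence the floor is at most $k$. Since agent $i$ is served $i$-th and at most her top $i-1$ items can have been removed, the item she takes is worth at least $v_{i(i)}\ge \mms_i/k$. For the last agent the same pigeonhole shows $\mms_n\le \sum_{t\ge n} v_{n(t)}$, and in the worst case agent $n$ is left exactly the items ranked $n,\dots,m$, so she in fact receives value at least $\mms_n$.

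The step I expect to be the main obstacle is establishing the displayed inequality with the \emph{exact} constant and verifying that the monotonicity collapses all $n-1$ separate bounds into the single value $k$; the binding case $i=n-1$ is what pins the guarantee to $1/\lfloor (m-n+2)/2\rfloor$ and explains the shape of the ratio (the appearance of $m-n+2$ reflects the contention for an agent's high-ranked items, and the division by $2$ reflects the two low bundles that survive at $i=n-1$). By contrast, truthfulness and the last-agent case are routine, and the trivial regime $m\le n$ only needs the remark that $\mms_i=0$ or that a single item already exceeds $\mms_i$.
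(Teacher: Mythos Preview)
Your proposal is correct and uses the same mechanism and the same skeleton as the paper: the sequential dictatorship $p_1p_2\cdots p_{n-1}p_n^{m-n+1}$, the bound $\mms_i\le \lfloor (m-i+1)/(n-i+1)\rfloor \cdot v_{i(i)}$, and the observation that this expression is maximized at $i=n-1$.

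The one noteworthy difference is in how that key inequality is obtained. The paper reaches it by iterating the Monotonicity Lemma of \cite{AMNS15}, namely $\mms_k(n-1,M\setminus\{j\})\ge \mms_k(n,M)$, applied $i-1$ times to reduce to an instance with $n-i+1$ players and $m-i+1$ items, and then uses the trivial bound on maximin share there. You instead argue directly on a maximin partition by pigeonhole: at least $n-i+1$ bundles avoid the top $i-1$ items, and each such bundle must contain at least $\lceil \mms_i/v_{i(i)}\rceil$ items drawn from the $m-i+1$ low-ranked ones. This is a clean self-contained derivation that avoids the external lemma; conversely, the paper's route has the advantage that the Monotonicity Lemma is a reusable tool that also appears elsewhere (e.g., in the analysis of $M_{\textsc{pr}}^{(n,m)}$). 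A second minor difference is that the paper first proves the result in the ordinal model (Theorem~\ref{thm:tr_pick_sec}) and then lifts it to the cardinal model via Lemma~\ref{lem:implications}(ii), whereas you establish truthfulness directly in the cardinal model; both are straightforward.
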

The proof follows from results in the next section (see the discussion before and after Theorem \ref{thm:tr_pick_sec}). For the case of two players, the mechanism of Theorem \ref{cor:val-gen}  has the following form:\vspace{-3pt}\\
\rule{\columnwidth}{0.6pt}
\noindent Mechanism $M_{\textsc{best item}}$: Given the reported valuations of the players, 
allocate to player 1 her best item and to player 2 the remaining items.\vspace{-3pt}\\
\rule{\columnwidth}{0.6pt}

Although the approximation ratio achieved by Theorem \ref{cor:val-gen} is quite small,
it is still an open question whether there exist better mechanisms for general $n, m$. We note also that $M_{\textsc{best item}}$ only utilizes the preference rankings of the players. Hence it is not even clear if there exist truthful mechanisms that can exploit more information from the valuation functions to achieve a better approximation.
 
For the remainder of this section, we discuss the case of $n=2$. 
We recall that for two players, the discretized cut and choose procedure is a non-truthful algorithm that produces an exact maximin share allocation; one player partitions the goods into two bundles that are as equal as possible, and the other player chooses her best bundle. To implement this in polynomial time, we can produce an approximate partitioning using a result of Woeginger \cite{Woeginger97} and then we can guarantee at least $(1-\varepsilon)\mms_i$ to 
each player, $\forall \varepsilon>0$. The reason this is not truthful is that player $1$ can manipulate the partitioning; in fact, she can compute her optimal strategy if she knows the valuations of player $2$ by solving a Knapsack instance. Thus, the question we would like to resolve is to find the best truthful approximation that we can guarantee for two players.

Notice that for $n=2, m<4$, the mechanism $M_{\textsc{best item}}$
does output an exact maximin share allocation. 
Further, when $m\in\{4, 5\}$, $M_{\textsc{best item}}$ outputs a $1/2$-approximation, according to Theorem \ref{cor:val-gen}.

On the other hand, we can deduce an impossibility result, using Theorem 5 of \cite{MP11}, which yields\footnote{The work of \cite{MP11} concerns a different problem however the arguments for their impossibility result can be employed here as well.}:

\begin{corollary}[implied by \cite{MP11}]
For $n=2, m\ge 4$, and for any $\varepsilon\in (0, 1/3]$, there is no truthful 
$(2/3 + \varepsilon)$-approximation mechanism for the cardinal model.
\end{corollary}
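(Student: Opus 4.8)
The plan is to combine the standard menu (or taxation) characterization of truthfulness with a carefully engineered family of four-item instances, reducing the combinatorial core to Theorem~5 of \cite{MP11}. The characterization I would record first is the following. Fix the report $\mathbf v_2$ of player~$2$ and let $R_1(\mathbf v_2)=\{A_1(\mathbf v_1',\mathbf v_2):\mathbf v_1'\}$ be the set of bundles player~$1$ can induce. Definition~\ref{def:truthful} is equivalent to saying that, for her true valuation $\mathbf v_1$, player~$1$ always receives a bundle in $\arg\max_{S\in R_1(\mathbf v_2)} v_1(S)$, and symmetrically for player~$2$. Thus a truthful $(2/3+\varepsilon)$-approximation mechanism equips each player with a menu from which she picks her favourite bundle, subject to the complementary bundle being good for the other player; the two-cycle monotonicity inequalities $v_i(S)\ge v_i(S')$ and $v_i'(S')\ge v_i'(S)$ for any two reports are the only consequence of truthfulness I would use.

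A point worth highlighting at the outset, since it explains why the bound is not immediate, is that one cannot fix a single report of player~$2$ and argue about player~$1$ in isolation. Player~$2$'s approximation guarantee forces $R_1(\mathbf v_2)\subseteq\mathcal F:=\{S: v_2(\bar S)\ge(2/3+\varepsilon)\mms_2\}$, but $\mathcal F$ is automatically universally good for player~$1$: since an exact (hence $(2/3+\varepsilon)$-approximate) maximin allocation always exists for two players, for every $\mathbf v_1$ some $S\in\mathcal F$ already satisfies $v_1(S)\ge(2/3+\varepsilon)\mms_1$. Hence the obstruction cannot come from any single instance; it must come from coupling the incentive constraints of the two players across several reports.

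I would therefore set up a short cycle of four-item profiles $P_1,\dots,P_k$, in which consecutive profiles differ in the report of a single player, and propagate the two-cycle monotonicity inequalities around the cycle. The additive values are chosen so that, up to the $\varepsilon$-slack, the approximation requirement pins down which bundles are admissible for the player who did not move, and so that the maximin shares equal a clean normalisation in which the admissible-but-suboptimal bundles are worth exactly a $2/3$ fraction of some player's $\mms$. Chaining the inequalities then forces, at one profile of the cycle, an allocation giving a player strictly less than $(2/3+\varepsilon)\mms$, contradicting the guarantee. This is precisely the combinatorial content of Theorem~5 of \cite{MP11}, whose instances I would transcribe into additive valuations over four items; the general case $m>4$ follows from the analogous family in \cite{MP11}, where the extra items are assigned values that keep the binding incentive constraints tight (rather than naively padded with zeros, which would only enlarge the menus and could rescue the mechanism).

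The main obstacle is this coupling together with the normalisation: the profiles must be balanced so that the slack left to player~$1$ by player~$2$'s constraints, and vice versa, is simultaneously too small for both players, which is exactly where the factor $2/3$ — rather than something closer to $1$ — emerges. Making the maximin shares line up with the admissible bundle values so that the $\varepsilon\le 1/3$ slack cannot absorb the gap is the delicate step; once the values are fixed, the monotonicity bookkeeping around the cycle is routine.
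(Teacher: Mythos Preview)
The paper does not actually prove this corollary: it simply invokes Theorem~5 of \cite{MP11} and remarks in a footnote that ``the arguments for their impossibility result can be employed here as well.'' So there is no detailed in-paper proof to compare your plan against. Your high-level strategy---chain several profiles that differ in one player's report and propagate the truthfulness inequalities until some profile admits no feasible allocation---is indeed the method used both in \cite{MP11} and in the paper's own proof of the stronger Theorem~\ref{thm:val-1/2}, so the spirit of your sketch is fine.

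There is, however, a concrete error in your treatment of the case $m>4$. You claim that ``naively padd[ing] with zeros \dots\ would only enlarge the menus and could rescue the mechanism,'' and that one must instead use special value assignments for the extra items. This is backwards. Zero-padding is exactly the reduction the paper uses (see the first line of the proof of Theorem~\ref{thm:val-1/2} and of Theorem~\ref{thm:5/6}), and it is valid: given a truthful $(2/3+\varepsilon)$-approximation mechanism $A$ for $m$ items, define a $4$-item mechanism $B$ by appending $m-4$ zero-valued items to each player's report, running $A$, and restricting the output to the first four items. Truthfulness of $B$ follows because any profitable deviation in $B$ is, verbatim, a profitable deviation in $A$ against the true valuation $(\mathbf v_i,0,\dots,0)$; the approximation guarantee transfers because adding items of value zero to both players leaves all maximin shares unchanged. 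The fact that $A$'s menus for general $m$-item reports may be large is irrelevant---$B$ only ever queries $A$ on the zero-padded slice, and the contradiction is derived for $B$, not for $A$ directly. So your stated reason for avoiding the padding does not hold, and the extension to $m\ge 4$ is in fact the trivial step, not the delicate one.
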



The above corollary leaves open whether there exist better mechanisms than $M_{\textsc{best item}}$ with approximation guarantees in $(1/2, 2/3]$. Our main result in this section is that we close this gap, by providing a stronger negative result, which shows that $M_{\textsc{best item}}$  is optimal for $n=2, m=4$. 
\begin{theorem}\label{thm:val-1/2}
For $n=2, m\ge4$, and for any $\varepsilon\in (0, 1/2]$, there is no truthful 
$(1/2 + \varepsilon)$-approximation mechanism for the cardinal model.
\end{theorem}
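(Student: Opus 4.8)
The plan is to argue by contradiction, first reducing to the base case $m=4$ and then exploiting the menu characterization of truthful single-agent mechanisms together with a short, carefully chosen family of instances whose truthfulness constraints I chain together.

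\emph{Reduction to $m=4$.} I would first show it suffices to rule out a truthful $(1/2+\varepsilon)$-approximation mechanism for $n=2,\,m=4$. Given such a mechanism $A$ for some $m>4$, restrict attention to instances in which items $5,\dots,m$ are valued at $0$ by both players, and let the induced $4$-item mechanism run $A$ on the zero-padded profiles, keeping the partition of $\{1,2,3,4\}$. Appending null items to a bundle changes neither a player's value for it nor her optimal partition, so $\mms_i(2,M)$ equals the maximin share of the induced $4$-item instance; the truthfulness and the approximation guarantee of $A$ are inherited on this restricted domain (a $4$-item deviation is just an $m$-item deviation keeping the zeros). Hence it is enough to derive a contradiction for $n=2,\,m=4$.

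\emph{Structural tool and why the argument must be delicate.} Assume for contradiction that a truthful $(1/2+\varepsilon)$-mechanism $A$ exists for $n=2,\,m=4$. The key fact I would use is that, with no money and a single agent, truthfulness forces \emph{menu} behaviour: fixing player $2$'s declaration $\mathbf v_2$, the map $\mathbf v_1\mapsto A_1(\mathbf v_1,\mathbf v_2)$ must always return a bundle that player $1$ weakly prefers to every bundle in its range $\mathcal R_1(\mathbf v_2)=\{A_1(\mathbf v_1',\mathbf v_2):\mathbf v_1'\}$, and symmetrically for player $2$. Two observations explain why no shortcut is available. First, for $n=2$ an exact maximin allocation always exists (cut-and-choose), so no \emph{single} instance can be infeasible. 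Second, the best $2$-subset of any valuation has value at least $\mms_i$, while $(1/2+\varepsilon)\mms_i\le \tfrac12 v_i(M)$ for $\varepsilon\le 1/2$; consequently one checks that the approximation guarantee can never, on its own, \emph{force} a player to need a particular item or a particular pair (any such requirement would imply $1<\tfrac12+\varepsilon$). Thus the impossibility is genuinely a consequence of the two coupled truthfulness constraints under complementarity, not of any isolated instance.

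\emph{Core construction.} The tight instance for $M_{\textsc{best item}}$ is the uniform one $\mathbf v_1=\mathbf v_2=(1,1,1,1)$, where $\mms_i=2$ and a single item is worth only $1=\tfrac12\mms_i$; to beat $1/2$ the mechanism is forced to hand each player a $2$-item bundle, i.e.\ to output a balanced partition into two pairs. Starting from this, I would fix player $2$'s declaration so as to shrink $\mathcal R_1$, and simultaneously perturb player $1$'s valuation (by a small $\delta$, and by scaling two items up so that the remaining pair falls below the approximation threshold) so that the guarantee pins the bundle delivered to each player down to one of only a few candidates. Considering a short cycle of instances that differ in one player's report at a time and summing the favourite-from-menu inequalities for \emph{both} players, I expect to reach an instance where the bundle the mechanism is forced to deliver to some player is worth strictly less than $(1/2+\varepsilon)\mms_i$, contradicting the approximation guarantee. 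A final verification checks that one fixed family of instances works uniformly for every $\varepsilon\in(0,1/2]$, which is where the scaling of the ``large'' items is used.

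\emph{Main obstacle.} The principal difficulty is exactly the slack noted above: when $\varepsilon$ is small the guarantee is weak and essentially any complementary pair of bundles satisfies it, so the contradiction cannot be localized to one instance or one fixed opponent report and must instead be squeezed out of the interaction of the two players' menus. The technical heart is therefore to engineer the perturbed valuations so that the approximation constraint is \emph{just} binding enough to determine the allocation on each instance of the cycle, while the chained truthfulness inequalities remain jointly infeasible; reconciling the boundary value $\varepsilon=1/2$, where $M_{\textsc{best item}}$ is exactly tight, with arbitrarily small $\varepsilon$ inside a single construction is the most error-prone step.
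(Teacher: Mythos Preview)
Your high-level strategy---reduce to $m=4$, then chain truthfulness constraints across a finite family of carefully chosen profiles until no allocation survives---is exactly the paper's approach. However, what you have written is a plan, not a proof: the entire technical content (the actual profiles and the chaining) is missing, and there is one structural gap that you do not seem to have noticed.

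\textbf{The missing case split.} You anchor the argument at the uniform profile $\mathbf v_1=\mathbf v_2=(1,1,1,1)$, correctly observing that there the mechanism must output a $2$--$2$ split. But once you perturb to other profiles, nothing forces the mechanism to keep producing $2$--$2$ allocations; a $(1/2+\varepsilon)$-approximation is perfectly compatible with a $1$--$3$ split at many profiles (e.g.\ give a player her top item when it is large). The paper handles this explicitly by splitting into two lemmas over a fixed family of profiles (permutations of $\{2\pm\varepsilon,1+\varepsilon,1-\varepsilon,\varepsilon/2\}$): one lemma rules out mechanisms that output $2$--$2$ on \emph{every} such profile, and a second, separate chaining argument rules out mechanisms that output a $1$--$3$ allocation on \emph{some} such profile. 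Your proposal never acknowledges this dichotomy, and the menu argument you sketch only covers the balanced branch.

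\textbf{The missing construction.} Even on the balanced branch, you do not produce the cycle. The paper uses six concrete profiles whose values are permutations of $\{2\pm\varepsilon,1+\varepsilon,1-\varepsilon,\varepsilon/2\}$; the point of these values is that $\mms_i=2\pm\varepsilon/2$ while the two ``small'' items together are worth less than $(1/2+\varepsilon)\mms_i$, so at each profile only two or three $2$--$2$ allocations survive the approximation constraint, and truthfulness (one-player deviations) then eliminates all of them at the last profile. Your text says you would ``perturb by a small $\delta$'' and ``scale two items up,'' but you neither name the profiles nor verify the chain, and you yourself flag this as ``the most error-prone step.'' That step \emph{is} the proof; without it the argument is not complete.
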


We prove the theorem for $m=4$, since we can trivially extend it to any number of items by adding dummy items of no value. The proof follows from Lemmas \ref{lem:val-1/2-2+2} and 
\ref{lem:val-1/2-1+3} below. Notice that the theorem is valid even if we drastically 
restrict the possible values of the items. 

\begin{lemma}\label{lem:val-1/2-2+2}
For $n=2, m=4$, and for any $\varepsilon\in (0, 1/2]$, there is no truthful $(1/2 + \varepsilon)$-approximation 
mechanism for the cardinal model that allocates two items to each player at every instance where the 
profiles are permutations of $\{2+\varepsilon, 1+\varepsilon, 1-\varepsilon, \varepsilon /2\}$ 
or $\{2-\varepsilon, 1+\varepsilon, 1-\varepsilon, \varepsilon /2\}$.
\end{lemma}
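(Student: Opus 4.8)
The plan is to first fix, for each of the two admissible value sets, the maximin share and hence the value the mechanism must guarantee each player, and then to classify the six possible two-element bundles as \emph{good} (meeting the target) or \emph{bad}. A direct computation gives $\mms = 2+\varepsilon/2$ for $\{2+\varepsilon,1+\varepsilon,1-\varepsilon,\varepsilon/2\}$ (the optimal partition isolates the top item against the other three) and $\mms = 2-\varepsilon/2$ for $\{2-\varepsilon,1+\varepsilon,1-\varepsilon,\varepsilon/2\}$ (the optimal partition pairs the top item with the $\varepsilon/2$ item). Multiplying by $1/2+\varepsilon$ gives each player's target, and I would then check which bundles clear it. The structural conclusion I expect is: a player is satisfied \emph{iff} her bundle either contains her top item, or equals the pair of her two middle items $\{1+\varepsilon,1-\varepsilon\}$ (value $2$); every bundle that contains her $\varepsilon/2$ item but not her top item lies strictly below target. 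For the first set the two-middles option itself drops below target once $\varepsilon$ passes a threshold strictly inside $(0,1/2)$, which isolates a degenerate regime handled separately.

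I would use these bundle classes to constrain the mechanism: on any family instance the $(1/2+\varepsilon)$ guarantee together with the $2+2$ restriction forces the output into the small set of allocations that are simultaneously good for both players, which I would enumerate (typically at most two per instance). Two design levers then drive the manipulations. First, reporting the first set instead of the second raises a player's \emph{declared} $\mms$ from $2-\varepsilon/2$ to $2+\varepsilon/2$, which for the larger values of $\varepsilon$ removes her two-middles option and \emph{forces} the mechanism to award her the top item. Second, permuting a valuation so that the $\varepsilon/2$ item sits where the other player does not need it lets whoever receives the top item also capture a middle item; since for either set a top-plus-middle bundle is worth $3-2\varepsilon$ or $3$ against only $2$ for the two middles, \emph{winning} the top item becomes strictly preferable.

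With these in hand the contradiction comes from a manipulation template: exhibit an instance on which the constraints relegate some player $i$ to the feasible but lower-value two-middles bundle, and a report of $i$ (toggling her value set and/or her ranking) on which the approximation constraint instead forces $i$'s top item to be awarded to her paired with a middle item, strictly raising her true value and so violating truthfulness. The easy half is the instance where both players share the same ranking and the same valuation: for the larger $\varepsilon$ no $2+2$ allocation clears both targets at all (the loser of the top item is stuck below $2$), an immediate impossibility that needs no appeal to truthfulness.

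The main obstacle, where the bulk of the work lies, is that each valuation admits several good bundles, so the mechanism retains genuine freedom in which good $2+2$ allocation to output (equivalently, adversarial tie-breaking), and no single pair of instances suffices: a mechanism that consistently favors one player by index can thread individual deviations. I would therefore close a short cycle of instances obtained from one base profile by swapping the two players' rankings and by toggling the top value between $2+\varepsilon$ and $2-\varepsilon$; for each instance I would write down the at most two jointly-good allocations and set up the truthfulness inequality for every one-player deviation joining two instances of the cycle. The claim to verify is that \emph{no} assignment of outputs satisfies all these inequalities at once --- every tie-breaking pattern leaves some player a strictly profitable deviation. Making this case analysis airtight, and uniform over all $\varepsilon\in(0,1/2]$ so that the small-$\varepsilon$ regime (where the two value sets behave identically and only the rankings can be exploited) is covered alongside the large-$\varepsilon$ regime (where toggling the set does the forcing), is the crux of the argument.
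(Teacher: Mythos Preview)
Your overall strategy---compute the two maximin shares, classify the two-item bundles as feasible or not, and then chain together profiles via single-player deviations until no consistent assignment of outputs survives---is exactly the paper's approach. Your preliminary computations are also correct: a player's feasible bundles are those containing her top item together with (for sufficiently small $\varepsilon$) the two-middles pair $\{1+\varepsilon,1-\varepsilon\}$, and the large-$\varepsilon$ shortcut you spot (on the identical $\{2+\varepsilon,\dots\}$ profile no $2{+}2$ split clears both targets once $\varepsilon$ exceeds roughly $0.41$) is a clean observation the paper does not isolate.

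The gap is that you stop at the plan and never build the chain, and the moves you propose for generating it are too coarse. You describe the cycle as ``swapping the two players' rankings and toggling the top value between $2+\varepsilon$ and $2-\varepsilon$''; but from a symmetric base profile swapping rankings is the identity, and toggling $2^\pm$ only shifts the approximation threshold without changing which \emph{physical} item is a player's top. The paper's argument needs six profiles in which the \emph{permutations} of the four values over the four fixed items $a,b,c,d$ vary substantially---e.g.\ one player places her $2^+$ on item $b$ while the other places $0^+$ there---so that a deviation relocates the deviator's top item and forces a specific physical item to change hands. Your ``forcing'' lever (report the $2^+$ set so two-middles becomes infeasible) only operates above $\varepsilon\approx 0.41$, and even then it only forces the mechanism to hand over the deviator's \emph{declared} top item, which does not raise her \emph{true} value unless the permutation is engineered so that this item is genuinely valuable to her. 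Until you write down the actual profiles, list the one to three feasible outputs at each, and verify that every choice of outputs violates truthfulness along some deviation edge---uniformly in $\varepsilon\in(0,1/2]$---the argument is not closed; and that case analysis is the entire content of the lemma.
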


\begin{proof}Let us first fix an ordering of the four items, say $a, b, c, d$. For the sake of readability we write $2^{+}, 2^{-}, 1^{+}, 1^{-},\allowbreak 0^{+}$ 
instead of $2+\varepsilon, 2-\varepsilon,  1+\varepsilon, 1-\varepsilon$ and $\varepsilon /2$.

Suppose that there is such a mechanism. We use six different profiles, 
the values of which are permutations of either $\{2^{+}, 1^{+}, 1^{-}, 0^{+}\}$ 
or $\{2^{+}, 1^{+}, 1^{-}, 0^{+}\}$. Notice that in such profiles the maximin share is $2+\varepsilon/2$ 
or $2-\varepsilon/2$ respectively. Since we want allocations that give to each player items of value 
at least $1/2 + \varepsilon$ of their maximin share, it is trivial to check that allocating $\{1^+, 0^+\}$ or $\{1^-, 0^+\}$ to a player is not feasible (we use this repeatedly below). 
The goal is to get a contradiction by reaching a profile where no possible allocations exist.

\noindent\textit{Profile 1:} $\{[2^-,1^+,1^-,0^+], [2^-,1^+,1^-,0^+]\}$. There are two feasible allocations, i) $(\{a,d\}, \allowbreak \{b,c\})$  and  ii)  $(\{b,c\}, \{a,d\})$.  W.l.o.g. we may assume that the mechanism outputs allocation i). The analysis in the other case is symmetric.


\noindent\textit{Profile 2:} $\{[2^-,1^+,1^-,0^+], [0^+,2^+,1^-,1^+]\}$. There are three feasible allocations,
 i) $(\{a,c\}, \allowbreak \{b,d\})$,  ii)  $(\{a,d\}, \{b,c\})$ and iii)  $(\{a,b\}, \{c,d\})$.  Allocation iii) is not possible, since 
 $p_2$ here could play $\mathbf v_2'=[2^-,1^+,1^-,0^+]$ like in Profile 1 and get a total value of $3>2$. 
 Allocations i) and ii) are currently possible.

\noindent\textit{Profile 3:} $\{[1^+,2^-,1^-,0^+], [0^+,2^+,1^-,1^+]\}$. There are two feasible allocations,  i) $(\{a,c\}, \allowbreak \{b,d\})$  and  ii)  $(\{a,b\}, \{c,d\})$. If the mechanism given Profile 2 outputs allocation ii), then neither allocation here is possible. Indeed, 
in Profile 2
$p_1$ could play $\mathbf v_1'=[1^+,2^-,1^-,0^+]$ like here
and get a total value of $3-2\varepsilon>2-\varepsilon/2$ or $3>2-\varepsilon/2$. Thus,  allocation ii) at Profile 2 is not possible. So, the mechanism, given Profile 2, outputs allocation i) of that profile. Then, using the same argument, allocation ii) 
here is not possible, since in Profile 2
$p_1$ could play $\mathbf v_1'=[1^+,2^-,1^-,0^+]$ like here and get a total value of $3>3-\varepsilon$. So, here, the mechanism outputs allocation i). 


\noindent\textit{Profile 4:} $\{[1^+,2^-,1^-,0^+], [1^+,2^+,1^-,0^+]\}$.  There are two feasible allocations,  i) $(\{a,c\}, \allowbreak \{b,d\})$  and  ii)  $(\{b,d\}, \{a,c\})$. Allocation ii) is not possible, since 
$p_2$ here could play $\mathbf v_2'=[0^+,2^+, \allowbreak 1^-,1^+]$ like in Profile 3
and get a total value of $2+3\varepsilon/2>2$. Thus, the mechanism outputs allocation i).

\noindent\textit{Profile 5:} $\{[1^+,1^-,2^-,0^+], [2^-,1^+,1^-,0^+]\}$.  There are two feasible allocations,  i) $(\{c,d\}, \allowbreak \{a,b\})$  and  ii)  $(\{b,c\}, \{a,d\})$. Allocation ii) is not possible, since in Profile 1
$p_1$ could play $\mathbf v_1'= \allowbreak [1^+,1^-,2^-,0^+]$  like here and get a total value of $2>2-\varepsilon/2$.
Thus, the mechanism outputs allocation i).

\noindent\textit{Profile 6:} $\{[1^+,1^-,2^-,0^+], [1^+,2^+,1^-,0^+]\}$.  There are two feasible allocations,  i) $(\{c,d\}, \allowbreak \{a,b\})$  and  ii)  $(\{a,c\}, \{b,d\})$. Allocation ii) is not possible, since 
$p_2$ here could play $\mathbf v_2'=[2^-, 1^+, \allowbreak 1^-, 0^+]$ like in Profile 5 
and get a total value of $3+2\varepsilon>2+3\varepsilon/2$. However, allocation i) is not possible either, since 
$p_1$ here could play $\mathbf v_1'=[1^+,2^-,1^-,0^+]$ like in Profile 4
 and get a total value of $3>2-\varepsilon/2$. 
 So, we can conclude that there are no possible allocations for this profile, which is a contradiction.
\end{proof}

Using Lemma \ref{lem:val-1/2-2+2}, we deduce that there must exist some instance where the mechanism allocates one item to one player and three items to the other. We prove below that this is not possible either.  

\begin{lemma}\label{lem:val-1/2-1+3}
For $n=2, m=4$, and for any $\varepsilon\in (0, 1/2]$, there is no truthful $(1/2 + \varepsilon)$-approximation 
mechanism for the cardinal model, which at some instance where the 
profiles are permutations of $\{2+\varepsilon, 1+\varepsilon, 1-\varepsilon, \varepsilon /2\}$ 
or $\{2-\varepsilon, 1+\varepsilon, 1-\varepsilon, \varepsilon /2\}$, allocates exactly one item to one of the players.
\end{lemma}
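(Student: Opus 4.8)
The plan is to reuse the deviation-chasing technique of Lemma~\ref{lem:val-1/2-2+2}, but now driven by the two truthfulness constraints that a $1+3$ split imposes on \emph{both} players. As in the proof of Lemma~\ref{lem:val-1/2-2+2}, write $2^{+},2^{-},1^{+},1^{-},0^{+}$ for $2+\varepsilon,2-\varepsilon,1+\varepsilon,1-\varepsilon,\varepsilon/2$, and recall that $\mms_i$ equals $2+\varepsilon/2$ on the $2^{+}$-type profiles and $2-\varepsilon/2$ on the $2^{-}$-type profiles, so that the threshold $(1/2+\varepsilon)\mms_i$ sits strictly above $1+\varepsilon$. The first step is a feasibility audit. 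A \emph{single} item can clear a player's threshold only if it is her top item (value $2^{\pm}$), since $1^{+}=1+\varepsilon$ is already below threshold; and the bundles $\{1^{+},0^{+}\}$ and $\{1^{-},0^{+}\}$ are infeasible, exactly as recorded in Lemma~\ref{lem:val-1/2-2+2}. The crucial structural remark is that \emph{approximation alone never forbids a $1+3$ split}: whenever a player's top item clears her threshold, giving it to her alone is feasible, because the other player then receives three items, and the value of any three of the four items is at least $v_j(M)-2^{\pm}=2+\varepsilon/2\ge \mms_j$. Hence the contradiction cannot come from the approximation guarantee and must be extracted purely from truthfulness, just as in Lemma~\ref{lem:val-1/2-2+2}.

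After this audit I would normalise the offending instance. If $A$ gives exactly one item to a player at some profile $V=(\mathbf v_1,\mathbf v_2)$ in the family, that item is her top item of value $2^{\pm}$; relabelling, assume $A$ allocates $\{a\}$ to player~$1$ with $v_{1a}=2^{\pm}$ and $\{b,c,d\}$ to player~$2$. This pins the value player~$1$ receives to exactly $2^{\pm}$, and I split on player~$1$'s type. By Definition~\ref{def:truthful} applied to player~$1$, no report $\mathbf v_1'$ can give her a bundle worth more than $2^{\pm}$ for her true $\mathbf v_1$. In the $2^{-}$-type case this is a sharp pruning rule: her two middle items are truly worth $1^{+}+1^{-}=2>2^{-}$, so across the whole column $(\,\cdot\,,\mathbf v_2)$ the mechanism must never hand player~$1$ a bundle containing both of them. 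In the $2^{+}$-type case the analogous rule is weaker, forbidding only that she receive item $a$ together with any further item (any such bundle being worth more than $2^{+}$). Symmetrically, truthfulness for player~$2$ forbids any report from giving her a bundle worth more than $v_2(\{b,c,d\})$, which I would instantiate with permutations of the same family that move $a$ into various positions of player~$2$'s ranking.

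With these rules in hand, the plan is to build a short chain of profiles linked by single-player deviations, mirroring Profiles~1--6 of Lemma~\ref{lem:val-1/2-2+2}, but now using the assumed $1+3$ allocation as the anchor that prunes neighbouring allocations. At each profile the feasible candidates are the two $2+2$ splits (eliminated by the profitable-deviation arguments of Lemma~\ref{lem:val-1/2-2+2}) together with the $1+3$ and $3+1$ splits (eliminated by the two truthfulness rules above rather than by infeasibility). The goal is to reach a terminal profile at which every candidate allocation has been pruned, contradicting that $A$ is well defined.

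The main obstacle is that, because approximation gives no leverage, \emph{every} elimination must be justified by exhibiting a profitable deviation, and the chain must be made to close in both the $2^{+}$- and $2^{-}$-singleton cases, whose truthfulness rules differ in strength. Concretely, the work lies in (i) reducing the arbitrary offending instance, by relabelling items and players and if necessary composing it with a linking profile of the Lemma~\ref{lem:val-1/2-2+2} type, to one canonical chain, and (ii) carrying out the subsidiary case split on how player~$2$ ranks item $a$, since that determines which complementary three-item bundles she would accept and hence which of her deviations are profitable. The $2^{+}$-singleton case is the delicate one: there the player~$1$ rule only blocks ``top item plus extra'', so closing the chain requires leaning more heavily on player~$2$'s deviations and on the infeasibility of $\{1^{\pm},0^{+}\}$, and verifying that these jointly exhaust the remaining options is where the real effort is concentrated.
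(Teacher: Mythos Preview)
Your proposal is a strategy outline rather than a proof: you correctly identify the deviation-chasing technique and the key feasibility facts (the single item must be the player's $2^{\pm}$ item; $\{1^{\pm},0^{+}\}$ is infeasible; approximation alone never rules out a $1+3$ split), but you stop short of constructing any chain and explicitly flag the closure of the chain as ``where the real effort is concentrated.'' That is precisely the content of the lemma, so as written this is not a proof.

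Comparing your plan to the paper's argument, three points stand out. First, you set up an unnecessary $2^{+}$/$2^{-}$ case split on player~$1$'s top value and a further case split on player~$2$'s ranking of item~$a$. The paper avoids both: it keeps $v_{1a}\in\{2^{+},2^{-}\}$ abstract throughout, and it eliminates any dependence on $\mathbf v_2$ by first deviating player~$2$ to a profile identical to player~$1$'s on $\{b,c,d\}$ (Profile~1), which forces the allocation to remain $(\{a\},\{b,c,d\})$ by truthfulness for $p_2$ together with the feasibility audit. This normalisation step is the main idea you are missing; it collapses your subsidiary case analyses into a single canonical starting point. Second, the paper does not confine itself to the permutation family. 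It freely uses profiles such as $[1^{-},v_{1a},0^{+},1^{-}]$ (a repeated value) and, crucially, the terminal profile $[1,1,1,1]$ for $p_1$, where $\mms_1=2$ so $p_1$ must receive at least two items---this is what generates the contradiction. Your plan, which speaks only of ``permutations of the same family,'' would have trouble producing a terminal profile at which every candidate is pruned. Third, the paper exploits a cyclic-shift trick: the passage from Profile~2 to Profiles~3--4 is repeated verbatim, shifted one item to the right, to obtain Profiles~5--7; this yields three anchor profiles (Profiles~1,~4,~7) against which every two-item bundle for $p_1$ at the final $[1,1,1,1]$ profile is a profitable deviation. Nothing in your outline anticipates this structure.

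In short, the technique you name is the right one, but the proof lives in the specific chain, and the paper's chain relies on (i) normalising $p_2$ first, (ii) leaving the family of profiles, and (iii) the shift trick---none of which your sketch contains.
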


\begin{proof}
Fix an ordering of the four items, say $a, b, c, d$.  We write $2^{+}, 2^{-}, 1^{+}, 1^{-},\allowbreak 0^{+}$ 
instead of $2+\varepsilon, 2-\varepsilon,  1+\varepsilon, 1-\varepsilon$ and $\varepsilon /2$. Suppose that there is such a truthful mechanism, and an instance  
$\{[v_{1a}, v_{1b}, v_{1c}, v_{1d}], \allowbreak  [v_{2a}, v_{2b}, \allowbreak v_{2c},v_{2d}]\}$ (that we refer to as the \textit{initial profile}), where the mechanism gives one item to $p_1$ 
and three items to $p_2$ (the symmetric case can be handled in the same manner). 
Like in the proof of Lemma \ref{lem:val-1/2-2+2}, it is straightforward to check that allocating $\{1^+, 0^+\}$ or $\{1^-, 0^+\}$ to a player is not feasible.

Recall that the values of each player are a permutation of  either $\{2^{+}, 1^{+}, 1^{-}, 0^{+}\}$ 
or $\{2^{-}, 1^{+}, 1^{-}, 0^{+}\}$. Since $p_1$ gets only one item, its value must be 
$2^+$ or $2^-$. W.l.o.g. we may assume that this item is $a$, so the produced allocation is 
$(\{a\}, \{b,c,d\})$. We will now construct a chain of profiles (Profiles 1-4) which will help us establish a contradiction.

\noindent\textit{Profile 1:} $\{[v_{1a}, v_{1b}, v_{1c}, v_{1d}], [2^-, v_{1b}, v_{1c}, \allowbreak v_{1d}]\}$. It is easy to see that $p_2$ can not get just item $a$, or item $a$ and the item that has value $0^+$, or any proper subset of $\{b,c,d\}$, since 
she could then play $\mathbf [v_{2a}, v_{2b}, v_{2c}, v_{2d}]$ as in the initial profile,
and end up strictly better. Moreover, $p_2$ cannot get a bundle that contains $a$ and (at least) one item with value $1^-$ or $1^+$, because then there is not enough value left for $p_1$. Thus, the only feasible allocation here is $(\{a\}, \{b,c,d\})$.
W.l.o.g., by possibly renaming items $b,c,d$, we take Profile 1 to be $\{[v_{1a},1^+,1^-,0^+]$, $[2^-, 1^+,1^-,0^+]\}$.

\noindent\textit{Profile 2:} $\{[v_{1a},1^+,1^-,0^+],[0^+,2^-,1^+,1^-]\}$. It is easy to notice that in any feasible allocation other than $(\{a\}, \{b,c,d\})$, $p_2$ could play $\mathbf v_2'=[2^-, 1^+,1^-,0^+]$ 
as in Profile 1 and end up with a better value. Thus, the mechanism here has to output $(\{a\}, \{b,c,d\})$. 


\noindent\textit{Profile 3:} $\{[1^-,v_{1a},0^+,1^-],[0^+,2^-,1^+,1^-]\}$. Here, $p_1$ cannot get a proper superset of $\{a\}$, since then in Profile 1 she could  play $\mathbf v_1'=[1^-,v_{1a},0^+,1^-]$ 
like here and end up  strictly better. The only other feasible allocation here is $(\{b\}, \{a,c,d\})$.

\noindent\textit{Profile 4:} $\{[1^-,v_{1a},0^+,1^-],[2^-,1^+,1^-,0^+]\}$. Here, $p_2$ cannot get $\{b, c\}$ or any proper subset of $\{a,c,d\}$, since she could then play $\mathbf v_2'=[0^+,2^-,1^+,1^-]$ 
like in Profile 3 and end up with a total value of $3-3\varepsilon/2$, which is strictly better. The only other feasible allocation here is $(\{b\}, \{a,c,d\})$.

By starting now at Profile 2 and repeating the arguments for Profiles 1, 2, and 3 --shifted one position to the right-- we have that for \textit{Profile 5:} $\{[1^-, v_{1a},0^+,1^+], \allowbreak [1^-,0^+,2^-,1^+]\}$ the only possible allocation is $(\{b\}, \{a,c,d\})$, and for \textit{Profile 6:} $\{[1^+,1^-, v_{1a},0^+],[1^-,0^+,2^-,1^+]\}$ the only possible allocation is $(\{c\}, \{a,b,d\})$.

\noindent\textit{Profile 7:} $\{[1^+,1^-, v_{1a},0^+],[2^-,1^+,1^-,0^+]\}$. Here, $p_2$ cannot receive $\{b, c\}$ or any proper subset of $\{a,c,d\}$, since she could then play $\mathbf v_2'=[1^-,0^+,2^-,1^+]$ 
as in Profile $6$ and be better off. The only other feasible allocation is $(\{c\},\allowbreak \{a,b,d\})$.

\noindent\textit{Final profile:} $\{[1,1,1,1],[2^-,1^+,1^-,0^+]\}$. Here, any feasible allocation has to give $p_1$  at least two items, otherwise it is not a $(1/2 + \varepsilon)$-approximation. However, one can check that for any such allocation, there is a profile among Profiles $1$, $4$ and $7$, where $p_1$ could play $\mathbf v_1'=[1,1,1,1]$ and end up strictly better. Thus, we conclude that there are no possible allocations here, arriving at a contradiction.
\end{proof}

This concludes the proof of Theorem \ref{thm:val-1/2}.

\section{The Ordinal Model}
\label{sec:rankings}
Several works in the fair division literature have proposed mechanisms that only ask for the ordinal preferences of the players. 
There are various reasons for such assumptions; apart from their simplicity in implementing them, the players themselves may feel more at ease as they may be reluctant to fully reveal their valuation. Here, one extra motive is to restrict the players' ability to manipulate the outcome.
 
A class of such simple and intuitive mechanisms that has been studied in previous works is the class of 
\emph{picking sequence mechanisms}, see, e.g., \cite{KC71,BK05,BT00,BL11,BoLa14,KNW13,KNWX13} and references therein. 
A picking sequence $\pi = p_{i_1}p_{i_2}\ldots p_{i_k}$ is just a sequence of players (possibly with repetitions). Each picking sequence naturally induces a deterministic allocation mechanism for the ordinal model as follows: first give to player $p_{i_1}$ her favorite item, then give to $p_{i_2}$ her favorite among the remaining items, and so on, and keep cycling through $\pi$ until all the items are allocated. Sometimes, periodicity is absent, because the length of the given sequence is at least $m$.

Notice that these mechanisms can be implemented by asking each player for her ranking over the items. 
And note also that these mechanisms are not generally truthful, unless they are sequential dictatorships, i.e.,\ they are induced by picking sequences of the form $p_{i_1}^{m_1}p_{i_2}^{m_2}\ldots p_{i_k}^{m_k}$, where $p_{i_1}, p_{i_2}, \ldots,  p_{i_k}$ are all different players and $\sum_i m_i \ge m$ (see \cite{BoLa14}).
 
Given a set of $n$ players $p_1,\ldots, p_n$, we now define the following mechanism:
\vspace{-3pt}\\
\rule{\columnwidth}{0.6pt}
\noindent  $M_{\textsc{pick seq}}^{(n, m)}$ is the mechanism induced by the picking sequence 
$\pi = p_1 p_2 p_3 \ldots p_{n-2}p_{n-1}p_{n}p_{n}\ldots p_{n}$. \vspace{-20pt}\\
\rule{\columnwidth}{0.6pt}

\vspace{2pt}Thus, given that there are enough items available, the first $n-1$ players receive exactly one item, and the last player receives the remaining $m-n+1$ items. This is a truthful mechanism, given the observation above.  
It is easy to see that if $m\le n+1$, then $M_{\textsc{pick seq}}^{(n, m)}$ constructs an exact maximin share allocation. 
For large values of $m$, 
however, the approximation deteriorates fast and we also have a strong impossibility result.

\begin{theorem}\label{thm:tr_pick_sec}
The mechanism $M_{\textsc{pick seq}}^{(n, m)}$ defined above is a truthful $1/\left\lfloor \frac{m-n+2}{2}\right\rfloor$-approximation for the ordinal model, for any $n\ge 2, m\ge n+2$. Moreover, there is no truthful mechanism for the ordinal model, induced by some picking sequence, that achieves a better approximation factor.
\end{theorem}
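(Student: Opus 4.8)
The plan is to prove the two halves separately: first that $M_{\textsc{pick seq}}^{(n,m)}$ is truthful and achieves the claimed ratio, and then that no truthful picking-sequence mechanism does better. Throughout I would write $k=\lfloor (m-n+2)/2\rfloor$, and record at the outset that $m\ge n+2$ gives $k\ge 2$ and, more importantly, $2k\ge m-n+1$, i.e.\ $2k+n\ge m+1$; this last inequality is what pins down the constant.

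For the positive direction, truthfulness is immediate: $M_{\textsc{pick seq}}^{(n,m)}$ is the sequential dictatorship $p_1^1\cdots p_{n-1}^1\, p_n^{\,m-n+1}$, and the excerpt already records (via \cite{BoLa14}) that sequential dictatorships are truthful in the ordinal model. For the approximation I would analyse the two groups of players using the maximin partition $T=(T_1,\dots,T_n)$ of player $i$, in which every bundle has value $\ge \mms_i$. The key combinatorial fact is that \emph{at most one} bundle of $T$ has more than $k$ items: if two bundles had $\ge k+1$ items each, the item count would be at least $2(k+1)+(n-2)=2k+n\ge m+1>m$, a contradiction. Hence at least $n-1$ bundles have size $\le k$, and the most valuable item of each such bundle has value $\ge \mms_i/k$; this produces $n-1$ distinct items of value $\ge \mms_i/k$. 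For a player $p_i$ with $i\le n-1$, exactly $i-1\le n-2$ items are gone before her turn, so at least one of these good items survives, and she picks something at least as valuable, i.e.\ $\ge \mms_i/k$. For the last player $p_n$ I would use a cleaner pigeonhole: the first $n-1$ players remove only $n-1$ items from the $n$ bundles of $p_n$'s maximin partition, so at least one bundle survives untouched, giving $p_n$ value $\ge \mms_n\ge \mms_n/k$. Together these give a $1/k$-approximation.

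For the impossibility direction, I would again invoke that a truthful picking-sequence mechanism must be a sequential dictatorship $p_{\sigma(1)}^{m_1}\cdots p_{\sigma(t)}^{m_t}$. If $t<n$, some player receives nothing, and an instance in which that player values every item positively forces ratio $0$. Otherwise $t=n$, $\sum_j m_j=m$, each $m_j\ge 1$, and I split into two cases. If some dictator other than the last takes $m_j\ge 2$ items, then the last dictator $L$ gets $m_L\le m-n$ leftover items; taking an instance in which $L$ has $n$ ``valuable'' items of value $1$ and $m-n$ ``cheap'' items of value $\varepsilon$, with every player ranking the valuable items on top, the earlier dictators (who collectively take $\ge n$ items) grab all $n$ valuable items while $\mms_L\ge 1$, so $L$ is left with value $m_L\varepsilon\to 0$ and the ratio tends to $0<1/k$. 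The remaining case $m_1=\dots=m_{n-1}=1$, $m_n=m-n+1$ is, up to relabelling, exactly $M_{\textsc{pick seq}}^{(n,m)}$, and for it I would exhibit a matching instance. Targeting the $(n-1)$-st dictator $X$, give $X$ a valuation with $n-2$ items of value $k$ and the remaining $m-n+2\in\{2k,2k+1\}$ items of value $1$; then $\mms_X=k$ while $X$'s $(n-1)$-st largest item equals $1=\mms_X/k$, and if all earlier dictators rank the $n-2$ big items first they consume exactly those, leaving $X$ an item of value $1$. This forces ratio exactly $1/k$, so the bound is tight and no truthful picking sequence beats it.

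The main obstacle I anticipate is pinning down the constant $k=\lfloor(m-n+2)/2\rfloor$ on both sides simultaneously. On the positive side the subtlety is that the single best item need not reach $\mms_i$ (it can be as small as $\mms_i/k$), so the argument must count $n-1$ guaranteed good items rather than one, and must pass through the two-large-bundles pigeonhole via $2k+n\ge m+1$. On the negative side the non-obvious point is identifying the correct victim: not the first dictator (whose single item is only forced down to $\mms/k$ in contrived ways) but the $(n-1)$-st dictator, and then engineering a valuation whose $(n-1)$-st order statistic is exactly $\mms_X/k$ while keeping $\mms_X$ large. The even/odd parity of $m-n$ needs a small separate check in computing $\mms_X$, but it does not change the value $k$.
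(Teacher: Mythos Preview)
Your proof is correct, and in fact more complete than the paper's on one point, but the positive half takes a genuinely different route.

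For the approximation guarantee, the paper invokes the Monotonicity Lemma of \cite{AMNS15}: for player $p_k$ with $k\le n-1$, after removing $k-1$ items one has $\mms_k(n,M)\le \mms_k(n-k+1,M')\le \big\lfloor\frac{m-k+1}{n-k+1}\big\rfloor\cdot \max_{j\in M'} v_{kj}$, and this is maximised at $k=n-1$. Your argument bypasses the lemma entirely: you work directly with player $i$'s maximin partition and use the pigeonhole $2(k+1)+(n-2)\ge m+1$ to produce $n-1$ distinct items each worth $\ge \mms_i/k$, of which at least one survives the $i-1\le n-2$ earlier picks. This is more elementary and self-contained. One small point to make explicit: the count $2(k+1)+(n-2)$ assumes each of the remaining $n-2$ bundles is non-empty, which holds because we may take $\mms_i>0$ (the case $\mms_i=0$ is trivial), so every bundle in an optimal maximin partition has positive value.

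For the impossibility half, the paper's reduction is slightly different in form: it shows that any positive guarantee forces $\sum_{j<k}m_j<k$ and $m_k>0$ for every $k$, which inductively pins down $m_1=\dots=m_{n-1}=1$. Your case split (``some non-last dictator has $m_j\ge 2$'' versus not) reaches the same conclusion by targeting the last dictator instead. Both are fine. Where you go further is in exhibiting an explicit tight instance for the $(n-1)$-st dictator; the paper states the optimality claim but does not write down a matching lower-bound instance, so your treatment is actually more careful here.
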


\begin{proof}
As mentioned above, the only strategyproof picking sequences are the ones of the form 
$s=a_{i_1}^{m_1} a_{i_2}^{m_2} \ldots a_{i_n}^{m_n}$, where $\{a_{i_1}, \ldots a_{i_n}\}=N$ 
and $\sum_{j=1}^{n}m_j = m$ . 
For ease of notation, we may each time rename the players, so that $a_{i_j}=p_j$. 

Let $k\in N$. If $\sum_{j=1}^{k-1}m_j \ge k$, then consider the case where the values are fully correlated, and for player $p_k$ we have $v_{kj}=1$ for $1\le j\le k$ and $v_{kj}=0$ for $k< j\le m$. 
Clearly, she will get a bundle of value $0$, while $\mms_{k}=1$. So, in order to have any guarantee with respect to the maximin share, we must have
$\sum_{j=1}^{k-1}m_j < k$ and $m_k>0$, for all $k\in N$. 
Then, the only possible sequences are of 
the form $a_{i_1} a_{i_2} \ldots a_{i_{n-1}}a_{i_n}^{m-n+1}$ (like the sequence that induces $M_{\textsc{pick seq}}^{(n, m)}$). 

To show that these picking sequences give a $1/\left\lfloor \frac{m-n+2}{2}\right\rfloor$-approximation we need the following lemma of \cite{AMNS15}:
\begin{lemma}[Monotonicity Lemma \cite{AMNS15}]\label{lem:monotonicity}
 Fix a player $i$ and an item $j$. Then for any other player $k\neq i$, it holds that 
 $\mms_k(n-1, M\mysetminus \{j\}) \geq \mms_k(n, M)$.
\end{lemma}

As above, assume 
$s=p_1 p_2 \ldots p_{n-1} p_{n}^{m-n+1}$, and  notice that  player $p_{n}$ always gets items of total value at least $\mms_{n}$. Any other player, $p_k$, gets one item, say of  value $x$. Let $M'$ be the set of available items right before $p_k$ picks. Apply the Monotonicity Lemma $k-1$ times to get $\mms_{k}(n, M)\le \mms_{k}(n-k+1, M')\le \big\lfloor \frac{m-k+1}{n-k+1}\big\rfloor \cdot \max_{j\in M'} v_{kj} = \big\lfloor \frac{m-k+1}{n-k+1}\big\rfloor \cdot x$. Since $\big\lfloor \frac{m-k+1}{n-k+1} \big\rfloor$ is maximized for $k=n-1$, we get the desired approximation ratio.
\end{proof}

Notice that Theorem \ref{thm:tr_pick_sec} combined with Lemma \ref{lem:implications}(ii) imply Theorem \ref{cor:val-gen}. 

Now, we return to the case of two players. For $n=2$, the mechanism $M_{\textsc{pick seq}}^{(2, m)}$ is identical to mechanism $M_{\textsc{best item}}$ defined in Section \ref{sec:values}. Hence, as already pointed out there, this mechanism achieves a $1/2$-approximation for $m\in \{4, 5\}$. 
We can now combine the impossibility result of Theorem \ref{thm:val-1/2} and Lemma \ref{lem:implications}(ii) to conclude that $M_{\textsc{pick seq}}^{(2, m)}$ is optimal for the ordinal model when $m\in\{4, 5\}$.

\begin{corollary}
\label{cor:ordinal-1/2}
For $n=2, m\ge 4$, and for any $\varepsilon\in (0, 1/2]$, there is no truthful $(1/2 + \varepsilon)$-approximation mechanism for the ordinal model.
\end{corollary}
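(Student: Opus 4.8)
The plan is to derive this as an immediate consequence of the two ingredients already established: the cardinal impossibility of Theorem~\ref{thm:val-1/2} and the reduction from ordinal to cardinal mechanisms in Lemma~\ref{lem:implications}(ii). The natural route is a proof by contradiction. First I would suppose, toward a contradiction, that for some fixed $n=2$, $m\ge 4$, and some $\varepsilon\in(0,1/2]$ there does exist a truthful $(1/2+\varepsilon)$-approximation mechanism $A$ for the ordinal model.

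Next I would invoke Lemma~\ref{lem:implications}(ii) with $\rho = 1/2+\varepsilon$. That lemma guarantees that $A$ can be transformed into a mechanism $B$ for the cardinal model that is (a) truthful and (b) achieves exactly the same approximation guarantee $\rho$. Concretely, $B$ reads the reported valuation vectors, extracts the induced rankings, and then feeds those rankings to $A$; since the output allocation is identical to $A$'s and the players' true valuations are unchanged, both truthfulness and the per-player approximation bound $v_i(B_i(V))\ge (1/2+\varepsilon)\,\mms_i$ are inherited directly. Hence $B$ is a truthful $(1/2+\varepsilon)$-approximation mechanism for the cardinal model with the same parameters $n=2$, $m\ge 4$, and $\varepsilon\in(0,1/2]$.

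Finally, this directly contradicts Theorem~\ref{thm:val-1/2}, which asserts that no such cardinal mechanism exists in precisely this parameter range. The contradiction establishes the corollary.

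There is essentially no genuine obstacle here, since the corollary is a packaging of results proved earlier; the whole content lies in Theorem~\ref{thm:val-1/2} and the reduction. The only point that warrants a moment's care is bookkeeping: one must confirm that the reduction of Lemma~\ref{lem:implications}(ii) preserves the approximation factor \emph{exactly} (it does, as the allocation is copied verbatim) and that the admissible range of $\varepsilon$ matches between the ordinal hypothesis and the cardinal impossibility (both are $(0,1/2]$). Once these alignments are noted, the argument is complete in a few lines.
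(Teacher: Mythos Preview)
Your proposal is correct and matches the paper's primary derivation exactly: the corollary is obtained by combining Theorem~\ref{thm:val-1/2} with Lemma~\ref{lem:implications}(ii), precisely via the contrapositive you describe. (For completeness, the paper also supplies in the appendix an independent six-profile argument that avoids the results of Section~\ref{sec:values}, but this is offered only as an alternative route.)
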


For the sake of completeness, in the Appendix we include a proof of Corollary \ref{cor:ordinal-1/2} that does not depend on the results of Section \ref{sec:values}.

The impossibility results of Theorem \ref{thm:val-1/2} and Corollary \ref{cor:ordinal-1/2} have a surprising consequence.  The mechanism $M_{\textsc{pick seq}}^{(2, m)}$ achieves the best possible approximation both for the cardinal and the ordinal model, for $m\in\{4, 5\}$. Therefore, in these cases, giving the mechanism designer access to more information does not improve the approximation factor at all, when truthfulness is required! 

We conclude this section with a general result on the limitations of the ordinal model. Judging from the case $n=2$, it seems that the lack of good approximation guarantees in the cardinal model is due to the truthfulness requirement. 
Here, however, an additional issue is the lack of information itself. 
Below, we prove an inapproximability result for any
mechanism in the ordinal model, independent of whether it is truthful or not.

\begin{theorem}\label{thm:1/H_n}
For $n\ge 2$, and for any $\varepsilon>0$, there is no  $(1/{H_n} + \varepsilon)$-approximation algorithm, be it truthful or not, for the ordinal model, where $H_n$ is the $n$\textsuperscript{th} harmonic number, with $H_n = \Theta(\ln n)$.
Moreover, for $n=3$, there is no $(1/2 + \varepsilon)$-approximation algorithm for the ordinal model. 
\end{theorem}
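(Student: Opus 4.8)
The plan is to exploit the defining weakness of the ordinal model: a deterministic algorithm sees only the reported rankings, so if I fix a ranking profile the mechanism is committed to a single allocation, and I may then choose \emph{any} valuation profile consistent with those rankings to expose that allocation as poor (note this uses determinism and the lack of information, never truthfulness, which is why it holds for all algorithms). Concretely, I would let every player report the same strict order $1\succ 2\succ\cdots\succ m$ over $m$ items, with $m$ taken large (extra items being harmless, by adding dummy items of zero value as in the reduction for Theorem~\ref{thm:val-1/2}). This pins the mechanism to one partition $P=(T_1,\dots,T_n)$, and it suffices to produce a short list of valuation profiles, all inducing this order, on which the single $P$ cannot simultaneously be a $(1/H_n+\varepsilon)$-approximation. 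We may assume every $T_i\neq\emptyset$, since otherwise the all-ones profile already gives that player value $0$ while $\mms_i=\lfloor m/n\rfloor\ge 1$.

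The core family consists of $n$ profiles $V^{[0]},\dots,V^{[n-1]}$, where in $V^{[k]}$ every player shares the valuation $v^{[k]}$ that assigns a huge value $V$ to each of the top $k$ items and value $1$ to each of the remaining $m-k$ items, perturbed by an arbitrarily small strictly decreasing term so that the induced order is exactly $1\succ\cdots\succ m$ for every $k$ (this is the one non-routine point, handled by the standard limiting argument $V\to\infty$ and perturbation $\to 0$). A direct computation gives $\mms(v^{[k]})=\big\lfloor (m-k)/(n-k)\big\rfloor=:q_k$, achieved by isolating the $k$ huge items and splitting the unit items evenly among the remaining $n-k$ bundles, and $q_k$ is increasing in $k$. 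Now order the bundles of $P$ by their best (smallest-index) item as $B_{(1)},\dots,B_{(n)}$, with top items $\mu_1<\cdots<\mu_n$, so $\mu_j\ge j$. Under $v^{[k]}$ any bundle holding a huge item is trivially satisfied, while a bundle $B_{(j)}$ with $j\ge k+1$ has $\mu_j\ge j>k$, hence consists only of unit items and has value $\approx |B_{(j)}|$. So $\rho$-approximation on $V^{[k]}$ forces $|B_{(j)}|\gtrsim\rho\,q_k$ for all $j\ge k+1$; as $P$ must be good on every $V^{[k]}$ at once and $q_k$ increases, each bundle satisfies $|B_{(j)}|\gtrsim\rho\,q_{j-1}=\rho\big\lfloor (m-j+1)/(n-j+1)\big\rfloor$.

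Summing over $j$ and reindexing by $i=n-j+1$ then gives
\[ m=\sum_{j=1}^{n}|B_{(j)}| \;\gtrsim\; \rho\sum_{i=1}^{n}\Big\lfloor \frac{m-n+i}{i}\Big\rfloor \;\ge\; \rho\,(m-n)\,H_n, \]
so that $\rho\le \frac{m}{(m-n)H_n}+o(1)$, which tends to $1/H_n$ as $m\to\infty$. Choosing any $m>n+n/(\varepsilon H_n)$ makes the right-hand side strictly below $1/H_n+\varepsilon$, ruling out any $(1/H_n+\varepsilon)$-approximation for every $n\ge 2$. The harmonic number appears exactly because the bundle with the $j$-th best top item is squeezed by the profiles $V^{[0]},\dots,V^{[j-1]}$, whose maximin shares grow like $(m-n)/i$.

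For the sharper bound $1/2$ when $n=3$ (note $1/H_3=6/11>1/2$, so the counting argument above is not enough), I would instead design a dedicated small instance: three players with carefully chosen, possibly distinct rankings exhibiting a cyclic conflict, together with a short list of valuation profiles consistent with those rankings, and argue by case analysis over the finitely many partitions the mechanism may output that some player can always be made to receive at most half of her maximin share. This $n=3$ gadget is where I expect the real difficulty to lie: unlike the clean averaging argument for the general $1/H_n$ bound, it requires isolating the right asymmetric configuration and verifying all of the mechanism's possible responses by hand.
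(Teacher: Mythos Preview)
Your argument for the general $1/H_n$ bound is essentially the paper's: fix the common ranking, observe that the bundle $B_{(j)}$ with the $j$th-best top item contains no item of index below $j$, and use a family of valuations in which the top $k$ items dominate to force a lower bound on $|B_{(j)}|$. The paper packages the valuations slightly differently---it normalizes so that player $i$'s first $i-1$ items have value $1$ and the rest have value $1/(m-i+1)$, avoiding your $V\to\infty$ limit and perturbation---but the resulting item-count inequality $\sum_j \lceil \rho\,q_{j-1}\rceil \le m$ is identical.

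Where you go wrong is the $n=3$ case. You drop the ceilings in your summation step, obtain only $\rho \le m/((m-n)H_n)\to 6/11$, and conclude that a separate ``cyclic gadget'' with distinct rankings and a hand case analysis is required---a construction you do not actually supply and flag as the hard part. But no new gadget is needed: your own counting argument already yields the $1/2$ bound if you keep the integrality. Take $n=3$, $m=6$; then $q_0=\lfloor 6/3\rfloor=2$, $q_1=\lfloor 5/2\rfloor=2$, $q_2=\lfloor 4/1\rfloor=4$, and for any $\rho>1/2$ the constraints $|B_{(j)}|\ge \lceil \rho\,q_{j-1}\rceil$ force $|B_{(1)}|\ge 2$, $|B_{(2)}|\ge 2$, $|B_{(3)}|\ge 3$, summing to $7>6$, a contradiction. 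This is exactly how the paper proves the $n=3$ clause. The ``real difficulty'' you anticipated does not exist; you simply discarded the ceiling information one step too early.
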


\begin{proof}
Let $A$ be an $\alpha$-approximation algorithm for the ordinal model, where $\alpha>0$. 
Consider an instance with large enough $m$, where all the players  agree on the ranking $1\succeq 2\succeq\ldots\succeq m$. 
Let $g_i$ be the best item that player $i$ receives by $A$. We renumber the players, if needed, 
so that if $i<j$ then $g_i < g_j$. We claim that $g_i=i$. To see that, consider player $n$. 
Clearly, by the definition of $g_n$ and the renumbering of the players, we have $g_n \ge n$. 
If $g_n > n$, let $v_{n 1} = \ldots = v_{n n}=1$ and $v_{n,  n+1} = \ldots = v_{n m} = 0$. Then, in such an instance, algorithm $A$ will fail to give an $\alpha$-approximation of $\mms_n$ to player $n$.
It follows that $g_n = n$, and therefore $1=g_1<g_2<\ldots<g_{n-1}<n$, which implies $g_i=i$ for every $i\in [n]$.

Now, for $i\ge 1$, suppose that $v_{i1} = \ldots = v_{i, i - 1}=1$ and $v_{i i} = \ldots = v_{i m} = \frac{1}{m-i+1}$. Observe that $\mms_i = \big\lfloor \frac{m-i+1}{n-i+1} \big\rfloor \frac{1}{m-i+1}$, and algorithm $A$ must give at least 
$\big\lceil  \alpha  \big\lfloor \frac{m-i+1}{n-i+1} \big\rfloor \big\rceil$ items to player $i$. 

Since there are $m$ items in total, we must have 
$\sum_{i=1}^{n} \big\lceil  \alpha  \big\lfloor \frac{m-i+1}{n-i+1} \big\rfloor \big\rceil\le m$. 
It follows that for any $\varepsilon >0$ and large enough $m$
\[\alpha  \le  \frac{m}{\sum_{i=1}^{n}  \big\lfloor \frac{m-i+1}{n-i+1} \big\rfloor} \le \frac{m}{\sum_{i=1}^{n}  \big( \frac{m-i+1}{n-i+1}  - 1\big)} 
= \frac{1}{\left( 1- \frac{n}{m} \right) \sum_{i=1}^{n} \frac{1}{n-i+1}} 
< \frac{1}{H_n} + \varepsilon \,.\]
Especially for $n=3$, assume that $\alpha>1/2$ and consider the same analysis as above with $m=6$. 
We get the  contradiction 
\[6  \ge  \textstyle{\sum_{i=1}^{3} \big\lceil  \alpha  \big\lfloor \frac{7-i}{4-i} \big\rfloor \big\rceil = \left\lceil  \alpha  \left\lfloor \frac{6}{3} \right\rfloor \right\rceil
+ \left\lceil  \alpha  \left\lfloor \frac{5}{2} \right\rfloor \right\rceil
+ \left\lceil  \alpha  \left\lfloor \frac{4}{1} \right\rfloor \right\rceil} 
\ge 2+2+3=7\,. \qedhere\]
\end{proof}

\section{The Public Rankings Model}
\label{sec:val-rank}
When the players' rankings are publicly known, one would expect to achieve better approximation ratios, while still maintaining truthfulness. Indeed, the mechanism now has more information, while the options for manipulation are greatly reduced. In particular, note that any picking sequence induces a truthful mechanism for the public rankings model.

We show that indeed this is the case; the impossibility results we obtain are less severe and we have improvements for the case of more than two players as well.

We focus first on two players. For $m<4$, the mechanism $M_{\textsc{pick seq}}^{(2, m)}$ from Section \ref{sec:rankings} gives an exact solution, like before. However, unlike what happens in the other two scenarios, for $m=4$ we now have a truthful exact mechanism. Before we describe the mechanism, we introduce some useful notation. 
For a player $i$, we will denote by $B_i(k_1, k_2, \ldots k_\ell)$ the set of items that are in the positions $k_1, k_2, \ldots k_\ell$,
of her ranking. E.g., $B_2(2, 4)$ denotes the bundle that 
contains the second and the fourth items in the ranking of player 2. \vspace{-3pt}\\
\rule{\columnwidth}{0.6pt}
\noindent Mechanism $M_{\textsc{pr-exact}}^{(2, 4)}$: Given the reported valuations of the two players $p_1, p_2$, and their actual rankings, consider two cases:\\
--- If their most valuable items are different, allocate the items according to the picking sequence $p_1 p_2 p_2 p_1$.\\
--- Otherwise, give to player 1 her most valuable bundle among $B_1(1)$ and $B_1(2, 3)$, and to player 2 the remaining items.\vspace{-5pt}\\
\rule{\columnwidth}{0.6pt}

\begin{theorem}
Mechanism $M_{\textsc{pr-exact}}^{(2, 4)}$ is truthful and produces an exact maximin share allocation for the public rankings model, for $n= 2, m=4$. 
\end{theorem}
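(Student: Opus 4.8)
The plan is to handle the two branches of $M_{\textsc{pr-exact}}^{(2,4)}$ separately, and for each branch to verify both the exact maximin guarantee and truthfulness. The key preliminary observation, and the reason truthfulness turns out to be easy here, is that in the public rankings model the rankings $\succeq_1,\succeq_2$ are fixed and known to the mechanism, and every reported valuation must be consistent with them. In particular each player's most valuable item is just the first item of her ranking, so which branch is executed (whether the two top items coincide) is determined entirely by the rankings and cannot be altered by any report. For notation I would write player $i$'s sorted values as $\alpha^{(i)}_1\ge\alpha^{(i)}_2\ge\alpha^{(i)}_3\ge\alpha^{(i)}_4$, and use throughout the easily checked four-item identity $\mms_i=\max\{\min(\alpha^{(i)}_1,\ \alpha^{(i)}_2+\alpha^{(i)}_3+\alpha^{(i)}_4),\ \min(\alpha^{(i)}_1+\alpha^{(i)}_4,\ \alpha^{(i)}_2+\alpha^{(i)}_3)\}$, which holds because among $2{+}2$ partitions pairing the top item with the bottom one is optimal.

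For truthfulness I would argue branch by branch. In the first branch the allocation is produced by the picking sequence $p_1p_2p_2p_1$, which only ever consults each player's favorite item among those still available; since reported values are consistent with the fixed true ranking, these choices, and hence the whole allocation, agree with what the ranking alone dictates, independently of the reported numbers. Thus no report changes the first-branch outcome. In the second branch, player $2$'s bundle is the complement of player $1$'s, and player $1$ receives whichever of the two \emph{predetermined} sets $B_1(1)$ and $B_1(2,3)$ she reports as more valuable. Player $2$'s report is never consulted, so she cannot manipulate; player $1$, choosing between two fixed bundles, maximizes her true value exactly by reporting truthfully, since any consistent deviation can only swap her to the other, weakly worse, bundle. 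This gives truthfulness in both branches.

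For the exact maximin guarantee I would trace each branch. In the first branch one checks from the picking sequence that player $1$ ends up with her top item plus one more, hence value at least $\alpha^{(1)}_1+\alpha^{(1)}_4$, while player $2$ ends up with her top item plus the better (for her) of the two items that are neither player's favorite, hence value at least $\alpha^{(2)}_1+\alpha^{(2)}_3$ (worst case: player $1$'s favorite is player $2$'s second-ranked item). Both quantities dominate $\mms_i$: indeed $\alpha_1+\alpha_4$ and $\alpha_1+\alpha_3$ each exceed both terms of the maximum in the identity above. In the second branch player $1$ receives $\max(\alpha^{(1)}_1,\ \alpha^{(1)}_2+\alpha^{(1)}_3)$, which again dominates her maximin share termwise; and player $2$ receives either her bottom three items, of value $\alpha^{(2)}_2+\alpha^{(2)}_3+\alpha^{(2)}_4$, or her top item together with player $1$'s least-ranked item, of value at least $\alpha^{(2)}_1+\alpha^{(2)}_4$, each of which beats $\mms_2$.

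The hard part will be the maximin bookkeeping in the second branch rather than the truthfulness argument: one must confirm that the $1{+}3$ split never shortchanges the player who receives three items, and must correctly identify the bundle left to player $2$ once player $1$'s choice is fixed. Here the two inequalities $\alpha_1+\alpha_4\ge\mms_i$ and $\alpha_2+\alpha_3+\alpha_4\ge\mms_i$ (both immediate from the identity, since each left-hand side dominates both terms of the maximum) are exactly what make the checks go through, and the fact that the second branch is entered only when the two players share a top item is precisely what lets me pin player $2$'s value on that item down to her full top value $\alpha^{(2)}_1$ in every sub-case.
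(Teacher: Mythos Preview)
Your proof is correct and follows essentially the same two-branch structure as the paper's: first observe that the branch is fixed by the public rankings so truthfulness reduces to the trivial check in the second branch, then identify the possible bundles each player can receive and verify each against $\mms_i$. The one difference is that you streamline the last step by using the closed-form identity $\mms_i=\max\{\min(\alpha_1,\alpha_2+\alpha_3+\alpha_4),\ \min(\alpha_1+\alpha_4,\alpha_2+\alpha_3)\}$, whereas the paper argues by contradiction over all seven nontrivial partitions; your device is a bit cleaner but the content is the same.
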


\begin{proof} 
To see why $M_{\textsc{pr-exact}}^{(2, 4)}$ is truthful, note that the players cannot affect which of the two cases of $M_{\textsc{pr-exact}}^{(2, 4)}$ will be employed, since this is defined by the publicly known rankings. In addition, only $p_1$ could strategize, in the case where she agrees with $p_2$ on the most valuable item. However, in that case $M_{\textsc{pr-exact}}^{(2, 4)}$ gives her the best bundle between two choices defined by her ranking, thus there is no incentive to lie about her true values. 

To prove now the guarantee for the maximin share, observe that when the two players disagree on their most valuable item, $p_1$ receives one of $B_1(1, 2)$, $B_1(1, 3)$, or $B_1(1, 4)$, and $p_2$ receives either $B_2(1, 2)$, or $B_2(1, 3)$. Similarly, when they agree on their most valuable item, $p_1$ receives her best bundle among $B_1(1)$ and $B_1(2, 3)$, and $p_2$ receives either a bundle of three items, or one of $B_2(1, 2)$, $B_2(1, 3)$, or $B_2(1, 4)$.  

Consider the seven possible ways  $p_i$ can split the four items into non-empty bundles: $(B_i(1),\allowbreak B_i(2, 3, 4))$, $(B_i(2),\allowbreak B_i(1, 3, 4))$, $(B_i(3),\allowbreak B_i(1, 2, 4))$, $(B_i(4),\allowbreak B_i(1, 2, 3))$, $(B_i(1, 2),\allowbreak B_i(3, 4))$, $(B_i(1, 3),\allowbreak B_i(2, 4))$ and $(B_i(1, 4),\allowbreak B_i(2, 3))$. From the definition of maximin share, in at least one of those, both bundles have value at least $\mms_i$. 

It is easy to see that the total value of $B_i(1, 3)$ (and thus of $B_i(1, 2)$), is always at least $\mms_i$, and the same holds for any bundle that contains three items.
Moreover, we claim that both $v_i(B_i(1, 4))$ and $\max\{v_i(B_i(1)), v_i(B_i(2, 3))\}$ are at least $\mms_i$, which suffice to prove the theorem. Indeed, if $v_i(B_i(1, 4))< \mms_i$ or $\max\{v_i(B_i(1)),\allowbreak v_i(B_i(2, 3))\} < \mms_i$, this implies that each one of $B_i(1)$, $B_i(2)$, $B_i(3)$, $B_i(4)$, $B_i(2, 3)$, $B_i(2, 4)$, and $B_i(3, 4)$ also has value less than $\mms_i$. Thus, none of the possible partitions has both bundles worth at least $\mms_i$, contradicting the definition of maximin share. 
\end{proof}

An interesting question is whether the above can be extended for any number of items. We exhibit below that the answer is no, hence non-truthful algorithms have a strictly better performance under this model as well. However, for general $m$ we provide later on an improved approximation in comparison to the other two settings.

\begin{theorem}\label{thm:5/6}
For $n=2$, and $m = 5$, there is no truthful $(5/6 + \varepsilon)$-approximation mechanism for any $\varepsilon\in (0, 1/6]$, while for $m\ge 6$, there is no truthful $(4/5 + \varepsilon)$-approximation mechanism for any $\varepsilon\in (0, 1/5]$.
\end{theorem}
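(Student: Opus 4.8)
The plan is to adapt the chain-of-profiles technique from the proofs of Lemmas~\ref{lem:val-1/2-2+2} and~\ref{lem:val-1/2-1+3} to the public rankings model, where the decisive new constraint is that \emph{every} valuation occurring in the chain---both the truthful report at each profile and the profitable deviation invoked to rule out an allocation---must be consistent with the fixed, publicly known ranking. I would first reduce the case $m\ge 6$ to $m=6$ by the dummy-item argument used after Theorem~\ref{thm:val-1/2}: append items of value $0$ at the bottom of both players' rankings. Since $0$-valued items change neither any player's realised value nor any maximin share, a truthful $(4/5+\varepsilon)$-approximation for $m$ items restricts to one for $6$ items, so it suffices to construct two independent chains, one proving the $5/6$ bound at $m=5$ and one proving the $4/5$ bound at $m=6$. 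In both chains I fix the two players' public ranking to be \emph{identical}, say $a\succ b\succ c\succ d\,(\succ e\succ f)$, so that an admissible report is exactly a value vector that is weakly decreasing along this order; this is the ``agreeing on the top item'' case that forced $M_{\textsc{pr-exact}}^{(2,4)}$ into its second branch, and it is where the two players' demands conflict most strongly.

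Next I would carry out a feasibility analysis pinned to the target ratio. For $m=5$ I use admissible vectors with $\mms_i=6$ (total value $12$), for example the patterns $[3,3,2,2,2]$, $[4,2,2,2,2]$ and $[6,3,1,1,1]$. Since a $(5/6+\varepsilon)$-approximation requires each player to receive a bundle of value exceeding $\tfrac{5}{6}\cdot 6=5$, and the smallest attainable value above $5$ is $6$, exactly the bundles of value $\ge 6$ are feasible; this is precisely why the statement restricts to $\varepsilon\le 1/6$, as at $\varepsilon=1/6$ the threshold equals the maximin share itself. One then reads off a \emph{rigid} family of candidate allocations that differs from pattern to pattern: $[3,3,2,2,2]$ admits only the $(2,3)$-split whose two-item recipient gets her top two items, $[4,2,2,2,2]$ admits any $(2,3)$-split whose two-item bundle contains the top item, and $[6,3,1,1,1]$ admits only the $(1,4)$-split assigning the single top item to one player. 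For $m=6$ and ratio $4/5$ I use admissible vectors with $\mms_i=5$ (total $10$), for example $[2,2,2,2,1,1]$, $[4,2,1,1,1,1]$ and $[5,1,1,1,1,1]$; here the threshold $\tfrac{4}{5}\cdot 5=4$ makes the bundles of value $\ge 5$ feasible (so the range $\varepsilon\le 1/5$ is forced), yielding respectively a rigid $(3,3)$-, $(2,4)$- and $(1,5)$-split. These rigidity statements play the role of the ``allocating $\{1^+,0^+\}$ is infeasible'' observations in the cardinal-model proofs.

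With the feasible families computed, I would assemble each chain exactly as in Lemmas~\ref{lem:val-1/2-2+2} and~\ref{lem:val-1/2-1+3}. Starting from one admissible profile and the allocations it leaves feasible, I pass to a neighbouring profile obtained by having a single player rescale her values \emph{within} the fixed order, and I eliminate all but one of its feasible allocations by the truthfulness argument: were the mechanism to output a forbidden allocation, the affected player could instead submit the weakly-decreasing report she used at an earlier profile and strictly raise her own value, violating truthfulness; because that report is itself order-consistent, the deviation is admissible and the public-rankings notion of truthfulness applies without modification. Iterating pins each profile down to a single forced allocation, and the chain is engineered so that the allocation forced at one profile (say, player~$1$ must receive the two-item bundle $\{a,b\}$) is incompatible with the one forced at another (player~$1$ must receive only $\{a\}$); the final profile is then reached with every remaining allocation defeated by a deviation to an earlier report, so no feasible allocation survives---the contradiction. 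Running the two chains separately and combining them with the dummy-item reduction yields the full statement.

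The main obstacle is exactly this admissibility restriction on the deviations. In the cardinal model the threat reports used to kill an allocation (such as $\mathbf v_2'=[0^+,2^+,1^-,1^+]$ in Lemma~\ref{lem:val-1/2-2+2}) are completely unconstrained, which is what lets the $1/2$ chain close in so few steps; here every threat report must be a weakly-decreasing rescaling of the deviator's values, so far fewer allocations can be ruled out per step and the attainable thresholds correspondingly degrade to $5/6$ and $4/5$. The delicate work is the simultaneous bookkeeping: choosing the magnitudes in each admissible profile so that (i) the intended maximin share is attained, (ii) the ratio leaves the desired rigid feasible family, and (iii) each feasible allocation at a later profile is genuinely dominated, \emph{for the deviating player}, by the outcome of an order-preserving report used earlier. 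Verifying (iii) across every branch---including the symmetric cases obtained by interchanging the two players---is where essentially all of the case analysis lives, and arranging the value patterns so that every branch closes is the crux of the proof.
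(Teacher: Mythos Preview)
Your approach is essentially the same as the paper's: fix an identical public ranking for both players, build a chain of order-consistent profiles, and use truthfulness (with order-preserving deviations) to eliminate feasible allocations until none remain; the reduction to $m=6$ via zero-valued dummy items is also what the paper does. The paper's concrete profiles are a bit different from the ones you suggest (it relies on the all-equal vector $[1,\dots,1]$ together with ``one large plus equal small'' vectors such as $[1,0.2,\dots,0.2]$ and a final asymmetric vector like $[0.7,0.3,0.25,\dots]$), but the logic of the chain---force a $(3,3)$ split, then force one player to receive only $\{a\}$, then reach a profile where both constraints cannot be met---is exactly what you describe.

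One caution: what you have written is a plan, not a proof. The entire content of the argument lives in the step you label (iii), and your candidate vectors leave many feasible allocations within each size class (e.g.\ $[4,2,1,1,1,1]$ admits several $(2,4)$ splits, and $[2,2,2,2,1,1]$ admits many $(3,3)$ splits), so closing the chain with those particular patterns may require more profiles than you anticipate. The paper's choice of near-degenerate vectors (equal values, or one dominant item) is what makes the chains short---five profiles for $m=6$ and two cases of five profiles for $m=5$---and you may find it easier to follow that template when you actually execute the bookkeeping.
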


\begin{proof}
We give the proof for $m= 6$, which can be extended to $m\geq6$, by adding dummy items of no value. The proof for $m=5$ is of similar flavor, albeit more complicated, and is included in the Appendix. 

Suppose that there exists a deterministic truthful mechanism for the public rankings model that achieves a $(4/5 + \varepsilon)$-approximation for some $\varepsilon>0$. We study five profiles where the ranking of the six items is $a\succeq_i b\succeq_i c\succeq_i d\succeq_i e \succeq_i f$ for $i\in\{1, 2\}$, thus it is feasible for both players to move between these profiles in order to increase the value they get. Recall that in our current model a player can strategize using the values of the items, but without changing their publicly known ranking.

\noindent\textit{Profile 1:} $\{[1, 1, 1, 1, 1, 1],[1,1,1,1,1,1]\}$. Here,  $\mms_i =3$ for $i\in\{1, 2\}$, so in order to achieve a better than a $0.8$-approximation, the mechanism must give to each player items of value greater than $0.8\cdot \mms_i = 2.4$. Thus each player has to receive three items. 
W.l.o.g. we may assume that $p_1$ gets item $a$ (the analysis in the other case is symmetric).

\noindent\textit{Profile 2:} $\{[1,0.2,0.2,0.2,0.2,0.2],[1,1,1,1,1,1]\}$. Here, $\mms_1 = 1$ and  $\mms_2 = 3$. The mechanism must give to $p_1$ a total value greater than $0.8 \cdot 1=0.8$ and to $p_2$ a total value greater than $0.8\cdot 3= 2.4$. Notice now that $p_2$ has to get at least three items, and therefore $p_1$ has to get a superset of $\{a\}$. In fact, $p_1$ gets a superset of $\{a\}$ of size three, otherwise she could play $\mathbf v_1'=[1, 1, 1, 1, 1, 1]$ like in Profile 1 and end up strictly better. So, we conclude that  both players get three items each, and $p_1$ gets item $a$.

\noindent\textit{Profile 3:} $\{[1,0.2,0.2,0.2,0.2,0.2],[1,0.2,0.2,0.2,0.2,0.2]\}$. Here,  $\mms_i =1$ for $i\in\{1, 2\}$, so in order to achieve something strictly greater than $0.8\cdot 1= 0.8$, there are only two feasible allocations: 
i) $(\{b, c, d, e, f\}, \{a\})$, and ii) $(\{a\}, \{b, c, d, e, f\})$.
Now, notice that allocation ii) is not possible, since then at Profile 2 $p_2$ could play $\mathbf v_2'=[1,0.2,0.2,0.2,0.2,0.2]$ like here and end up strictly better.
Thus, the mechanism outputs $(\{b, c, d, e, f\}, \{a\})$.

\noindent\textit{Profile 4:} $\{[1,1,1,1,1,1],[1,0.2,0.2,0.2,0.2,0.2]\}$. Here, $\mms_1= 3$ and  $\mms_2= 1$. The mechanism must give to $p_1$ a total value greater than  $0.8 \cdot 3=2.4$ and to $p_2$ a total value greater than $0.8\cdot 1= 0.8$. Notice now that $p_1$ has to get five items, since otherwise she could play $\mathbf v_1'= \allowbreak [1,0.2,0.2,0.2,0.2,0.2]$ like in Profile 2 and end up strictly better. Thus $p_2$ has to get $\{a\}$ to achieve the desired ratio.

\noindent\textit{Profile 5:} $\{[1,1,1,1,1,1],[0.7,0.3,0.25,0.25,0.25,0.25]\}$. Here, $\mms_1= 3$ and  $\mms_2= 1$. The mechanism must give to $p_1$ a total value greater than $0.8 \cdot 3=2.4$ and to $p_2$ a total value greater than $0.8\cdot 1= 0.8$. First, notice that $p_1$ must get at least three items. Moreover, if the mechanism does not give item $a$ to $p_2$, then there is no way for $p_2$ to get total value strictly greater than 0.8 with at most three items. Therefore, $p_2$ has to get a strict superset of $\{a\}$. However, this is not feasible either, since in Profile 4 $p_2$ could play $\mathbf v_2'=[0.7,0.3,0.25,0.25,0.25,0.25]$ like here and end up strictly better. Thus, we conclude that there are no possible allocations here, arriving at a contradiction.
\end{proof}

Exploiting the fact that picking sequences induce truthful mechanisms for the public rankings model, we can get more positive results for two players and any $m$.
\vspace{-3pt}\\
\rule{\columnwidth}{0.6pt}
\noindent  $M_{\textsc{pr}}^{(2, m)}$ is the mechanism for two players induced by the picking sequence 
$p_1 p_2 p_2$. \vspace{-5pt}\\
\rule{\columnwidth}{0.6pt}

\vspace{2pt} We have the following result for $M_{\textsc{pr}}^{(2, m)}$.


\begin{theorem}
\label{thm:M2}
For $n=2$ and any $m\geq 1$,  $M_{\textsc{pr}}^{(2, m)}$  is a truthful $2/3$-approximation mechanism for the public rankings model.
\end{theorem}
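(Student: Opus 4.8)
The plan is to analyze the mechanism $M_{\textsc{pr}}^{(2,m)}$ induced by the picking sequence $p_1 p_2 p_2$. Here $p_1$ takes her single favorite item, and $p_2$ receives \emph{all} remaining $m-1$ items. Truthfulness is immediate from the discussion in Section \ref{sec:rankings}: this is a picking sequence mechanism for the public rankings model, and such mechanisms are always truthful there (a player only reveals a ranking to the extent it is used, and the ranking is public and fixed, so no profitable deviation in the reported values exists). So the entire content of the theorem is the $2/3$-approximation guarantee, which I would split into the two players.

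\textbf{Player $p_2$.} Since $p_2$ receives everything except one item, namely the single item $g$ that $p_1$ grabbed first, I would argue $v_2(M \mysetminus \{g\}) \geq \frac{2}{3}\mms_2$. The clean way is via the Monotonicity Lemma (Lemma \ref{lem:monotonicity}): removing one item and one player, $\mms_2(1, M\mysetminus\{g\}) \geq \mms_2(2,M)$. But $\mms_2(1, M\mysetminus\{g\}) = v_2(M\mysetminus\{g\})$, so in fact $p_2$ gets value at least $\mms_2(2,M) = \mms_2$ — even better than $2/3$. So $p_2$'s bundle is never the binding case; the approximation loss comes entirely from $p_1$.

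\textbf{Player $p_1$.} This is the crux. Player $p_1$ receives only her single most valuable item, of value $v_1^{\max} = \max_j v_{1j}$, and I must show $v_1^{\max} \geq \frac{2}{3}\mms_1$. The key structural fact is that $\mms_1 \leq \lceil m/2 \rceil \cdot v_1^{\max}$ is far too weak; instead I would use a sharper bound tailored to a single-item allocation. The cleanest route is to bound $\mms_1$ directly: in the maximin partition $(T_1,\dots,T_n)$ (with $n=2$) achieving $\mms_1$, the worst bundle has value $\mms_1$, so $v_1(M) \geq 2\mms_1$, giving $v_1^{\max} \geq v_1(M)/m$, which is too weak for large $m$. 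The better argument is a case analysis on how the top item compares to $\mms_1$: if $v_1^{\max} \geq \mms_1$ we are trivially done; otherwise every item has value $< \mms_1$, and I can invoke the standard ``bag-filling''/reduction observation that when all item values are below $\mms_1$, one can lower-bound a single top item relative to $\mms_1$. The sharpest bound for the $n=2$ single-item case is that $v_1^{\max} \geq \frac{2}{3}\mms_1$ precisely because if the top item were worth less than $\frac{2}{3}\mms_1$, one could not assemble two bundles each of value $\geq \mms_1$ from items all below $\frac{2}{3}\mms_1$ without one bundle falling short — a pigeonhole/greedy argument on the two-bundle partition.

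\textbf{Main obstacle.} The hard part will be the tight analysis of $p_1$'s guarantee, i.e.\ proving $v_1^{\max} \geq \frac{2}{3}\mms_1$ and confirming this factor is what the $p_1 p_2 p_2$ sequence actually delivers (as opposed to something looser). I expect the tight instance to be one where $p_1$ would ideally want two items — e.g.\ three items of value $1/3 + \delta$ balanced against the rest — forcing the worst bundle to $\mms_1$ while the single best item only recovers a $2/3$ fraction. I would verify the bound by contradiction: assume $v_1^{\max} < \frac{2}{3}\mms_1$ and show the optimal maximin partition into two bundles cannot give both bundles value $\geq \mms_1$, since each bundle's value is built from items each smaller than $\frac{2}{3}\mms_1$ and the total $v_1(M) < 2\mms_1 + (\text{slack})$ is too constrained. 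Assembling this pigeonhole carefully, and checking it matches the $2/3$ claim exactly, is the only genuinely non-routine step; the truthfulness and the $p_2$-side bound are immediate.
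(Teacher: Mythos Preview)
Your proposal rests on a misreading of the mechanism. A picking sequence $\pi$ is applied \emph{periodically}: as the paper says just before defining $M_{\textsc{pick seq}}^{(n,m)}$, one ``keep[s] cycling through $\pi$ until all the items are allocated.'' So $M_{\textsc{pr}}^{(2,m)}$, induced by $\pi = p_1 p_2 p_2$, runs $p_1 p_2 p_2 p_1 p_2 p_2 \ldots$; player $p_1$ does \emph{not} get only her single top item, she gets roughly one out of every three items. Your entire analysis of $p_1$'s guarantee therefore targets the wrong quantity, and the inequality you set out to prove, $v_1^{\max} \geq \tfrac{2}{3}\mms_1$, is simply false: with $m=4$ equal-valued items we have $\mms_1 = 2$ but $v_1^{\max} = 1 < \tfrac{4}{3}$. (This is exactly why $M_{\textsc{best item}}$, which \emph{does} give $p_1$ only one item, achieves only $1/\lfloor (m)/2\rfloor$ in Theorem~\ref{cor:val-gen}.) Likewise, your $p_2$ analysis is for the wrong bundle: she does not get $M\mysetminus\{g\}$.

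The paper's argument is quite different and appears as the special case $n=2$ of Theorem~\ref{thm:Mn}. For $p_1$, because she picks first in each block of three, she secures at least $\tfrac{1}{3}\,v_1(M) \geq \tfrac{1}{3}\cdot 2\,\mms_1 = \tfrac{2}{3}\,\mms_1$. For $p_2$, after one item is removed one views the sequence as $p_2 p_2 p_1$ on $M' = M\mysetminus\{g\}$, so $p_2$ gets at least $\tfrac{2}{3}\,v_2(M') = \tfrac{2}{3}\,\mms_2(1,M') \geq \tfrac{2}{3}\,\mms_2(2,M)$ by Lemma~\ref{lem:monotonicity}. Both bounds use only that each player always takes her best remaining item and hence captures at least the appropriate fraction of the remaining value; there is no pigeonhole or case analysis on item sizes at all.
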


Hence, for $n=2$, we have a pretty clear picture on what we can achieve for any $m$, leaving only a small gap, i.e., $[2/3, 4/5]$, between the impossibility result and Theorem \ref{thm:M2}.

We can also obtain constant factor approximations for more than two players, which has been elusive in the other two models. E.g., for $n=3$, we can achieve a $1/2$-approximation. 
In particular, for any $n\geq 2$, and $m\geq 1$:
\vspace{-3pt}\\
\rule{\columnwidth}{0.6pt}
\noindent  $M_{\textsc{pr}}^{(n, m)}$ is the mechanism induced by the picking sequence $p_1 p_2\allowbreak p_3\allowbreak \ldots p_{n-1} p_n p_n$.
\vspace{-5pt}\\
\rule{\columnwidth}{0.6pt}

\begin{theorem}\label{thm:Mn}
For any $n\geq 2$, and any $m\geq 1$, the mechanism $M_{\textsc{pr}}^{(n, m)}$ is a truthful $\frac{2}{n+1}$-approximation mechanism for the public rankings model. 
\end{theorem}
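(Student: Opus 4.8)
The plan is to verify truthfulness and the approximation guarantee separately. Truthfulness is immediate: $M_{\textsc{pr}}^{(n,m)}$ is induced by a picking sequence, and as already observed any picking sequence yields a truthful mechanism in the public rankings model (a player cannot alter her publicly known ranking, and reporting any values consistent with it leads to exactly the same sequence of greedy ``favorite among the remaining'' picks). So the entire content is the bound $v_i(T_i)\ge \frac{2}{n+1}\,\mms_i$ for every player $i$. I would first dispose of the case $m<n$, where $|M|<n$ forces an empty bundle in every $n$-partition, so $\mms_i=0$ for all $i$ and the bound holds trivially; hence I assume $m\ge n$, so that every player picks at least once.

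Fix a player $p_k$ and let $M_k'$ be the set of items still available when $p_k$ makes her first pick. Since the first $k-1$ picks are made by $p_1,\dots,p_{k-1}$ (one item each), $M_k'=M\setminus\{g_1,\dots,g_{k-1}\}$ has $m-k+1$ items, and every item $p_k$ ever receives lies in $M_k'$. The first ingredient lower-bounds $v_k(T_k)$ by a fraction of $v_k(M_k')$ via a domination argument: since each player always takes a highest-valued available item (consistency with her ranking), the $j$-th item taken by $p_k$ has value at least that of every item allocated after it. Cutting the pick order restricted to $M_k'$ into consecutive blocks, each beginning with a pick of $p_k$, every block contains at most $n+1$ items and its first ($p_k$-owned) item dominates the whole block; summing over blocks gives $v_k(T_k)\ge \frac{1}{n+1}v_k(M_k')$ for $k\le n-1$. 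For $k=n$ each block begins with the two consecutive picks of $p_n$, which together dominate the remaining $\le n-1$ items of the block, and a short computation yields the stronger $v_n(T_n)\ge \frac{2}{n+1}v_n(M_n')$.

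The second ingredient lower-bounds $v_k(M_k')$ in terms of $\mms_k=\mms_k(n,M)$. Applying the Monotonicity Lemma (Lemma \ref{lem:monotonicity}) $k-1$ times, once per removed item $g_1,\dots,g_{k-1}$, gives $\mms_k(n-k+1,M_k')\ge \mms_k(n,M)$. Combining this with the proportionality bound $\mms_k(n',M')\le v_k(M')/n'$ (the minimum bundle of any partition is at most the average) yields $v_k(M_k')\ge (n-k+1)\,\mms_k$ for $k\le n-1$, while for $k=n$ the reduction ends at a single player, so $v_n(M_n')=\mms_n(1,M_n')\ge \mms_n$ directly. Putting the two ingredients together, for $k\le n-1$ we obtain $v_k(T_k)\ge \frac{n-k+1}{n+1}\,\mms_k\ge \frac{2}{n+1}\,\mms_k$, the worst case being $k=n-1$, and for $k=n$ we obtain $v_n(T_n)\ge \frac{2}{n+1}\,\mms_n$, which completes the argument.

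The step needing the most care is this block-domination analysis, precisely because $M_k'$ is \emph{not} an independent sub-instance: after $p_k$'s first pick the players $p_1,\dots,p_{k-1}$ keep returning, so one cannot simply recurse on a smaller instance with $n-k+1$ players. The resolution is exactly the split above --- use the Monotonicity Lemma only to transfer the maximin lower bound onto $M_k'$, and reason about the realized allocation purely through the $(n{+}1)$-periodicity of the sequence together with per-pick domination. I also expect the incomplete final block, and the boundary $m=n$ (where $p_n$ picks only once), to require a couple of lines of separate checking; but in each such case $p_k$'s picks still dominate their (now shorter) block, so the stated fractions only improve and the conclusion is unaffected.
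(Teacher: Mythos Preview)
Your proposal is correct and follows essentially the same approach as the paper: apply the Monotonicity Lemma $k-1$ times to pass from $\mms_k(n,M)$ to $\mms_k(n-k+1,M_k')\le \tfrac{v_k(M_k')}{n-k+1}$, then use the $(n{+}1)$-periodicity of the picking sequence to argue that $p_k$'s picks capture at least a $\tfrac{1}{n+1}$ fraction (and $p_n$'s two consecutive picks at least a $\tfrac{2}{n+1}$ fraction) of the remaining value. The paper phrases the periodicity step as ``think of $p_i$ as the first player of the rotated sequence $p_i p_{i+1}\ldots p_n p_n p_1 \ldots p_{i-1}$ on $M'$'', which is exactly your block-domination argument said differently; your write-up is in fact more explicit about the boundary cases ($m<n$, the short final block, and the verification that two top picks dominate $\tfrac{2}{n+1}$ of an $(n{+}1)$-block), which the paper leaves implicit.
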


\begin{proof}
Let $M$ be the initial set of items, and consider player $p_i$, where $1 \leq i \leq n-1$, right before she picks her first item. We can think of $p_i$ as the first player of the picking sequence $p_i p_{i+1}\allowbreak \ldots \allowbreak  p_{n-1} p_n p_n p_1 p_2 \allowbreak \ldots \allowbreak p_{i-1}$ on a new set of items $M' \subseteq M$, in which $i-1$ items have been removed. Now, since $p_i$ picks her best item out of every $n+1$ items, it is easy to see that the total value she gets is at least 
\[\frac{\sum_{j \in M'}v_{ij}}{n+1} \geq \frac{(n-i+1)\mms_i(n-i+1, M')}{n+1} \geq \frac{(n-i+1)\mms_i(n, M)}{n+1} \geq \frac{2\mms_i(n, M)}{n+1}\,,\] 
where the first inequality follows directly from the definition of the maximin share, and the second inequality follows from Lemma \ref{lem:monotonicity} (applied $i-1$ times).

In a similar manner, we have that the total value $p_n$ gets is at least $\frac{2(n-n+1)\mms_n(n, M)}{n+1} = \frac{2\mms_n(n, M)}{n+1}$ which concludes the proof.
\end{proof}


Notice that Theorem \ref{thm:M2} is a corollary of Theorem \ref{thm:Mn}. Also, observe that $\frac{2}{n+1}$ is better than the guarantee of Theorem \ref{thm:tr_pick_sec}. However, one can do significantly better, even when restricted to mechanisms induced by picking sequences as shown in Theorem \ref{thm:sqrt(n)} below.
We should mention though, that the picking sequence constructed in the proof has length $m$; an interesting question is whether there exist short picking sequences that significantly improve $\frac{2}{n+1}$. Of course, it remains an open problem to design truthful mechanisms achieving better --or even no-- dependence on $n$.

\begin{theorem}\label{thm:sqrt(n)}
For $\varepsilon >0$ and large enough $n$ and $m$, there exists 
a truthful $\frac{1}{n^{0.5+\varepsilon}}$-approximation mechanism for the public rankings model.
\end{theorem}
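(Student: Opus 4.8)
The plan is to exhibit a single picking sequence of length $m$ and bound each player's utility, exactly along the lines of the proof of Theorem \ref{thm:Mn} but with a \emph{player-dependent} picking frequency tuned to the factor that Lemma \ref{lem:monotonicity} gains for us. Since any picking sequence is truthful in the public rankings model, it suffices to analyze the guarantee. I would set $q=\lceil n^{0.5+\varepsilon}\rceil$ and aim to show that each player $p_i$ receives value at least $\tfrac1q\mms_i(n,M)$.

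The workhorse is a ``first-picker averaging'' observation already implicit in Theorem \ref{thm:Mn}: if, while running a picking sequence, the gap between two consecutive picks of $p_i$ is at most $B$ (so at most $B-1$ items are removed by others in between), then in the block of items consumed from one pick of $p_i$ up to, but not including, her next pick, $p_i$ picks \emph{first}; hence she grabs an item worth at least $\tfrac1B$ of the total value of that block to her, since every other item in the block was available to her. Summing over all her blocks, $p_i$ collects at least $\tfrac1B\,v_i(M_i')$, where $M_i'$ is the set of items still present at her first pick. I would therefore design the sequence around two quantities per player: the number $b_i$ of items removed before $p_i$'s first pick, and a uniform upper bound $B_i$ on the gaps between her consecutive picks.

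Given such a sequence the analysis proceeds as follows. First, activate the players in the first $n$ positions via a round-robin prefix $p_1p_2\cdots p_n$, so that $b_i=i-1<n$. Applying Lemma \ref{lem:monotonicity} $b_i$ times and then the definition of the maximin share to the remaining instance gives $v_i(M_i')\ge (n-b_i)\,\mms_i(n-b_i,M_i')\ge (n-b_i)\,\mms_i(n,M)$. Combined with the averaging observation, $p_i$ receives at least $\tfrac{n-b_i}{B_i}\,\mms_i(n,M)=\tfrac{n-i+1}{B_i}\,\mms_i(n,M)$, so the choice $B_i=q\,(n-i+1)$ makes this exactly $\tfrac1q\,\mms_i(n,M)$, as desired. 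What remains is to show that a length-$m$ sequence simultaneously realizing all the gap constraints ``$p_i$ appears at least once in every window of $B_i$ positions'' exists. This is a windows-scheduling feasibility question whose relevant density is $\sum_{i=1}^n \tfrac1{B_i}=\tfrac1q\sum_{i=1}^n\tfrac1{n-i+1}=\tfrac{H_n}{q}$. Since $q=\lceil n^{0.5+\varepsilon}\rceil$ dwarfs $H_n=\Theta(\ln n)$, this density is $o(1)$ for large $n$, leaving ample room; for large $m$ the $O(n)$ cost of the activation prefix is negligible against $m$, and any leftover positions may be assigned arbitrarily, since extra picks never decrease a player's utility and never enlarge a gap.

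The step I expect to be the crux is exactly this scheduling realization: turning the ``one pick of $p_i$ per window of length $B_i$'' requirements, for all $n$ players at once, into an explicit sequence of length exactly $m$ that also respects the activation prefix and the integrality of the $B_i$. The comfortable slack between the true density $H_n/q$ and $1$ is what makes this routine, and is, I suspect, precisely why the statement uses the (clean but suboptimal) exponent $0.5+\varepsilon$ rather than being pushed toward the $1/H_n$ barrier of Theorem \ref{thm:1/H_n}: with this much head-room even a crude assignment of each $p_i$ to an arithmetic progression of step $\lfloor B_i\rfloor$, de-conflicted greedily, suffices. The remaining details are straightforward, namely that the final block of each player extends to the end of the sequence (so it is still covered by the last window) and that the condition $n-b_i\ge 1$ needed for Lemma \ref{lem:monotonicity} holds since $b_i=i-1\le n-1$.
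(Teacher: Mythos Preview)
Your approach matches the paper's: the same round-robin activation prefix $p_1\cdots p_n$, the same per-player period $B_i=q(n-i+1)$ with $q\approx n^{0.5+\varepsilon}$, the same averaging argument against the bound $v_i(M_i')\ge (n-i+1)\,\mms_i$ obtained via Lemma~\ref{lem:monotonicity}, and the same density check $\sum_i 1/B_i=\alpha H_n=o(1)$ for feasibility. The bulk of the paper's proof is precisely the step you flag as the crux---it gives an explicit construction (list all pairs $(p_i,\,i+jB_i)$, sort by the second coordinate to obtain the sequence) and then verifies via a somewhat lengthy calculation that every pair lands no later than its target position---so your sketch is aimed in exactly the right direction and nothing is missing conceptually.
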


\begin{proof}
Let $\alpha=\frac{1}{n^{0.5+\varepsilon}}$. Below, we show that there exists a picking sequence mechanism
that $\alpha$-approximates maximin share allocations. This directly implies the statement of the theorem. 

Focus on player $p_i$. Assume we had a picking sequence 
that gives the $i$th pick to $p_i$ (call this the $0$th pick of $p_i$) and then keeps giving her her $j$th pick on, or before, the  $\left( i+ j \left\lfloor \frac{n-i+1}{\alpha} \right \rfloor\right)$th overall pick. We claim that this way $p_i$ would get a bundle 
$S_i$ with $v_i(S_i)\ge \alpha \mms_i$. To see that, notice that when $p_i$ starts picking, in the worst case her $i-1$ best items are already gone and the total remaining value is at least 
$\sum_{j}v_{ij} - (i-1)\mms_i \ge (n-i+1)\mms_i $.
Then, because of the distribution of her picks, $p_i$ is going to get at least 
$1/\left\lfloor \frac{n-i+1}{\alpha} \right \rfloor$ 
of this value, i.e., at least
${\left\lfloor \frac{n-i+1}{\alpha} \right \rfloor}^{-1} (n-i+1)\mms_i \ge 
\frac{\alpha}{n-i+1} (n-i+1)\mms_i = \alpha \mms_i $.

Next, we describe how to construct a picking sequence that satisfies the above property for all
players. Notice that we are going to give a single picking sequence of length $m$. The main idea 
is that, if $p_i$ is going to be satisfied, we want her $j$th pick to be no later than the 
$\left( i+ j \left\lfloor \frac{n-i+1}{\alpha} \right \rfloor\right)$th overall pick. So, we make 
sure there is not too large demand from other players for picks that come before the 
$\left( i+ j \left\lfloor \frac{n-i+1}{\alpha} \right \rfloor\right)$th. The construction 
itself is very simple:
\begin{itemize}
\item[--] For $1\le i \le n$ and 
$0\le j \le \left\lfloor \frac{\alpha (m-i)}{n-i+1} \right \rfloor$, we create the pair 
$\left(p_i, i+ j \left\lfloor \frac{n-i+1}{\alpha} \right \rfloor\right)$.
\item[--] We sort the pairs with respect to their second coordinate.
\item[--] The first coordinates with respect to the above sorting are a prefix of the picking sequence.
\item[--] If the length of the above sequence is $m$, we are done; otherwise we arbitrarily assign the remaining picks.
\end{itemize}
There are two things to be proven here. The first is to show that the third step of the construction 
does not give a picking sequence of length greater than $m$. This is not hard to see, given that $n, m$ 
are large enough. 
There are at most $\left\lfloor \frac{\alpha (m-i)}{n-i+1} \right \rfloor + 1 $ pairs for each $i$, so by summing up we have:
\begin{IEEEeqnarray*}{rCl}
\sum_{i=1}^{n}\left( \left\lfloor \frac{\alpha (m-i)}{n-i+1} \right \rfloor + 1 \right)  
& \le & n + \sum_{i=1}^{n}\frac{\alpha (m-i)}{n-i+1} 
 \le   n + \alpha m\sum_{i=1}^{n}\frac{1}{n-i+1} -  \alpha \sum_{i=1}^{n}\frac{i}{n-i+1}  \\
& \le &  n + \alpha H_n m \le  m \,. 
\end{IEEEeqnarray*}
(Notice that $n$ and $m$ need not be very large for the last inequality to hold. E.g., for $\varepsilon=0.15$, it suffices to have $n\ge 5, m\ge \frac{n}{1-\alpha H_n}$.)


The second goal here is to show that the resulting picking sequence has 
the desired property, i.e., for any $i, j$ there are at most 
$ i+ j \left\lfloor \frac{n-i+1}{\alpha} \right \rfloor$ pairs that come no later than  
$\left(p_i, i+ j \left\lfloor \frac{n-i+1}{\alpha} \right \rfloor\right)$ in the sorting of the second step.
For fixed $i, \ell$ and $j=0$ we have 
\[  \ell+ k \left\lfloor \frac{n-\ell+1}{\alpha} \right \rfloor
\le i \ \Rightarrow \
\begin{cases}
    k = 0 \,,   & \text{if } \ell\le i\\
    \text{contradiction} \,,  & \text{if } \ell> i\\
\end{cases}\ , \]
and therefore there are exactly $i$ pairs that come no later than 
$\left(p_i, i\right)$ in the sorting. To see the contradiction, notice that for 
$k\ge 1$ we have
\[ \ell+ k \left\lfloor \frac{n-\ell+1}{\alpha} \right\rfloor \le i
\ \Rightarrow \ \ell+ \frac{n-\ell+1}{\alpha} - 1  \le n 
\ \Rightarrow \ \alpha\ge 1\,. \]
Now, for fixed $i, \ell$ and $j\ge 1$ we have 
\[  \ell+ k \left\lfloor \frac{n-\ell+1}{\alpha} \right \rfloor
\le i+ j \left\lfloor \frac{n-i+1}{\alpha} \right \rfloor 
\ \Rightarrow \ k\le \left\lfloor \frac{i-\ell+ j \left\lfloor\frac{n-i+1}{\alpha}\right\rfloor}{\left\lfloor \frac{n-\ell+1}{\alpha}\right\rfloor} \right\rfloor  ,\]
where $k$ can be as small as $0$. 
Therefore, we need to show that
\begin{equation}\label{sumoffloors}
 n + \sum_{\ell=1}^{n}  \left\lfloor \frac{i-\ell+ j \left\lfloor\frac{n-i+1}{\alpha}\right\rfloor}{\left\lfloor \frac{n-\ell+1}{\alpha}\right\rfloor} \right\rfloor  
\le i+ j \left\lfloor \frac{n-i+1}{\alpha}\right\rfloor . 
\end{equation}
Before we prove \eqref{sumoffloors}, we should note that $i+ j \left\lfloor\frac{n-i+1}{\alpha}\right\rfloor > n+\frac{1}{\alpha}-1$ when $j\ge 1$.
Indeed, 
\[ i+ j \left\lfloor\frac{n-i+1}{\alpha}\right\rfloor  \ge  i+ \left\lfloor\frac{n-i+1}{\alpha}\right\rfloor
\ge i+ \frac{n-i+1}{\alpha} - 1 > i+ n-i+ \frac{1}{\alpha} - 1 = n+\frac{1}{\alpha}-1\,.\] 
Hence, we have
\begin{IEEEeqnarray*}{rCl}
n+ \sum_{\ell=1}^{n}  \left\lfloor \frac{i-\ell+ j \left\lfloor\frac{n-i+1}{\alpha}\right\rfloor}{\left\lfloor \frac{n-\ell+1}{\alpha}\right\rfloor} \right\rfloor  
& \le & n + \sum_{\ell=1}^{n}  \frac{i-\ell+ j \left\lfloor\frac{n-i+1}{\alpha}\right\rfloor}{\left\lfloor \frac{n-\ell+1}{\alpha}\right\rfloor}  \\
& = &  n  + {\textstyle{\left( i+ j \left\lfloor \frac{n-i+1}{\alpha}\right\rfloor\right)}} \sum_{\ell=1}^{n}\frac{1}{\left\lfloor \frac{n-\ell+1}{\alpha}\right\rfloor} - \sum_{\ell=1}^{n}\frac{\ell}{\left\lfloor \frac{n-\ell+1}{\alpha}\right\rfloor}  \\
& \le &  n  + {\textstyle{\left( i+ j \left\lfloor \frac{n-i+1}{\alpha}\right\rfloor\right)}} \sum_{\ell=1}^{n}\frac{1}{\frac{n-\ell+1}{\alpha} - 1} - \sum_{\ell=1}^{n}\frac{\ell}{ \frac{n-\ell+1}{\alpha}}  \\
& \le &  n  + {\textstyle{\left( i+ j \left\lfloor \frac{n-i+1}{\alpha}\right\rfloor\right)} \frac{\alpha}{1-\alpha}} H_n - \alpha\int_{0}^{n}\frac{x}{n-x+1}dx  \\
& \le &  n  + {\textstyle{\left( i+ j \left\lfloor \frac{n-i+1}{\alpha}\right\rfloor\right)}\left( 1 - \frac{1- \alpha -\alpha H_n}{1- \alpha}\right) } - \alpha \left( (n+1)\ln(n+1) - n \right) \\
& \le & {\textstyle{\left( i+ j \left\lfloor \frac{n-i+1}{\alpha}\right\rfloor\right)} +  n   - \frac{1- \alpha -\alpha H_n}{1- \alpha} \left(n+\frac{1}{\alpha}-1\right)} - \alpha \left( (n+1)\ln(n+1) - n \right)
\end{IEEEeqnarray*}
At this point, it suffices to show that  
\[n   - \frac{1- \alpha -\alpha H_n}{1- \alpha} \left(n+\frac{1}{\alpha}-1\right) - \alpha \left( (n+1)\ln(n+1) - n \right)\le 0 .\]
Using $\alpha=\frac{1}{n^{0.5+\varepsilon}}$, the above is equivalent to
\begin{IEEEeqnarray*}{l}
n  - \left(1 - \frac{H_n}{n^{0.5+\varepsilon}-1}\right)  \left(n+n^{0.5+\varepsilon}-1\right) - \frac{(n+1)\ln(n+1)}{n^{0.5+\varepsilon}} + n^{0.5-\varepsilon} \le 0  \\
 \Longleftrightarrow n  - n - n^{0.5+\varepsilon} + 1 + \frac{n H_n}{n^{0.5+\varepsilon}-1} + H_n  -\frac{(n+1)\ln(n+1)}{n^{0.5+\varepsilon}} + n^{0.5-\varepsilon} \le 0 \\
 \Longleftrightarrow  - n^{0.5+\varepsilon}+ H_n + n^{0.5-\varepsilon} + 1 + \frac{n H_n}{n^{0.5+\varepsilon}-1}  -\frac{(n+1)\ln(n+1)}{n^{0.5+\varepsilon}}  \le 0 \,.  
\end{IEEEeqnarray*}
To see that the latter is clearly true for large $n$, notice that 
\begin{IEEEeqnarray*}{rCl}
\frac{n H_n}{n^{0.5+\varepsilon}-1}  -\frac{(n+1)\ln(n+1)}{n^{0.5+\varepsilon}}
& = & \frac{n^{1.5+\varepsilon}(H_n-\ln(n+1))+\left( n + 1 - n^{0.5+\varepsilon} \right) \ln(n+1)}{n^{0.5+\varepsilon}(n^{0.5+\varepsilon}+1)}   \\
& \le &  \frac{n^{1.5+\varepsilon} + n \ln(n+1)}{n^{0.5+\varepsilon}(n^{0.5+\varepsilon}+1)} = \Theta\left(n^{0.5-\varepsilon}\right) \,, 
\end{IEEEeqnarray*}
where we used the known fact that $\ln(n+1)\le H_n\le \ln(n+1) +1$. This proves \eqref{sumoffloors}, and completes the proof. (Again, it is not necessary that $n$ is very large for things to work. E.g., for $\varepsilon=0.25$, it suffices to have $n\ge 17$.)
\end{proof}

\section{A Simple Randomized Mechanism}
\label{sec:probab}
Despite the negative results in the previous sections, there are (asymptotically) good guarantees 
when the values are random and the mechanism is randomized as well. 
In fact, we consider the very simple mechanism that independently allocates each item uniformly at random.
The approximation guarantee for maximin share follows from the corresponding proportionality guarantee in Theorem 
\ref{thm:random-prop} below. The theorem works for a wide range of distributions, including the 
--discrete or continuous-- uniform distribution on subsets of $[0, 1]$. 

The $D_i(n, m)$s, here, 
are distributions over $[0, 1]$ with the following property: there exists some $\varepsilon>0$ such that for 
any $n, m \in \mathbb N$ and any $i\in [n]$ the mean of $D_i(n, m)$ is at least $\varepsilon$. The simplest 
--although quite realistic-- case is when each $D_i$ does not depend on $n$ and $m$ at all, but in general 
this need not be the case, as long as that their means do not vanish as $n$ and $m$ grow large.
Also, notice that independence is only assumed between the values of different items, and therefore the
valuation functions of different players can be correlated.

We should mention that the objective here is different from the objective in the probabilistic analyses of
\cite{AMNS15} and \cite{KPW16}. We do not hope to produce an allocation that gives to each player $i$ value at least $\mms_i$ with high probability; 
each player should only get a fraction of that, subject to truthfulness. This is why we are able to 
cover such a wide range of distributions, using such a naive mechanism.

\begin{theorem}\label{thm:random-prop}
Let $N=[n]$ be a set of players and $M=[m]$ be a set of items, and for each $i\in [n]$ assume that the $v_{ij}$s are 
i.i.d.\ random variables that follow $D_i(n, m)$, where the $D_i(n, m)$s are as described above. 
Then, for any $\rho\in[0, 1)$ and for large enough $m$, there is a truthful randomized mechanism that allocates 
to each player $i$ a set of items of total value at least $\frac{\rho}{n}v_i(M) \ge \rho\cdot\mms_i$ with probability 
$1-O\left(n^2/m\right)$. 
\end{theorem}

\begin{proof}
In what follows we consider the mechanism that independently allocates each item uniformly at random among the players.
Truthfulness follows from the fact that the mechanism completely ignores any input given by the players.

For any $i\in [n], j\in [m]$ let $X_{ij}$ be the indicator r.v. for the event ``player $p_i$ gets item $j$'', 
and $Y_i$ be the total value $p_i$ receives. We have $Y_i=\sum_{j}X_{ij}v_{ij}$. Next, we calculate $\mathrm{E}[Y_i]$
and $\mathrm{Var}(Y_i)$. Clearly, $\mathrm{E}[X_{ij}]=\frac{1}{n}$ and 
$\mathrm{Var}(X_{ij})=\frac{1}{n}-\frac{1}{n^2}=\frac{n-1}{n^2}$.
If by $\mu_i$ and $\sigma_i^2$ we denote the mean and the variance of $D_i(n, m)$ respectively\footnote{Of course, $\mu_i$ and $\sigma_i^2$ are functions of $n$ and $m$, but for simplicity we drop their arguments.}, then 
\[\mathrm{E}[Y_i]=\sum_{j}\mathrm{E}[X_{ij}v_{ij}]= \sum_{j}\frac{1}{n}\mu_i=\frac{m\mu_i}{n}\,,\] 
and 
\begin{IEEEeqnarray*}{rCl}
\mathrm{Var}(Y_i) & = & \sum_{j=1}^{m} \mathrm{Var}(X_{ij}v_{ij})  =  \sum_{j=1}^{m} \left(\mathrm{Var}(X_{ij})\cdot\mathrm{Var}(v_{ij}) +
\mathrm{E}^2[X_{ij}]\cdot\mathrm{Var}(v_{ij}) + \mathrm{Var}(X_{ij})\cdot\mathrm{E}^2[v_{ij}]\right) \\ 
	& = & \sum_{j=1}^{m} \left(\frac{\sigma_i^2(n-1)}{n^2} +
\frac{\sigma_i^2}{n^2} + \frac{(n-1)\mu_i^2}{n^2}\right)  =  \frac{m\left(n\sigma_i^2+(n-1)\mu_i^2\right)}{n^2}\,,
\end{IEEEeqnarray*}
by using the independence of $X_{ij}$ and $v_{ij}$ for any $i\in [n], j\in [m]$, as well as 
the independence of $X_{ij_1}v_{ij_1}$ and $X_{ij_2}v_{ij_2}$ for any $i\in [n], j_1, j_2\in [m]$. Notice that $\sigma_i^2 \le 1-\mu_i^2$ and therefore
\[\mathrm{Var}(Y_i)\le \frac{m\left(n(1-\mu_i^2) +(n-1)\mu_i^2\right)}{n^2} = \frac{m\left(n -\mu_i^2\right)}{n^2} \le \frac{m}{n}  \,.\]

\noindent Now, by setting $a_i =\frac{\rho m \mu_i + \rho m^{0.75}}{n}$, we have
\begin{IEEEeqnarray*}{rCl}
\mathrm{P}\left( \exists i \text{ such that } Y_i < \frac{\rho}{n}v_i(M)\right)  
			&\le& \sum_{i=1}^{n} \mathrm{P} \left( Y_i<\frac{\rho\cdot v_i(M)}{n}\right)  \\ 
&=& \sum_{i=1}^{n} \mathrm{P} \left( Y_i<\min\bigg\lbrace \frac{\rho\cdot v_i(M)}{n}, a_i\bigg\rbrace 
										\text{\ \ or\ \ } \frac{\rho\cdot v_i(M)}{n}>\max\left\lbrace Y_i, a_i\right\rbrace \right)  \\
&\le& \sum_{i=1}^{n} \mathrm{P} \left( Y_i<\min\bigg\lbrace \frac{\rho\cdot v_i(M)}{n}, a_i\bigg\rbrace\right) 
+\sum_{i=1}^{n} \mathrm{P} \left( \frac{\rho\cdot v_i(M)}{n}>\max\left\lbrace Y_i, a_i\right\rbrace \right)  \\
& \le & \sum_{i=1}^{n} \mathrm{P}(Y_i<a_i) + \sum_{i=1}^{n} \mathrm{P}\left( \frac{\rho\cdot v_i(M)}{n}>a_i \right) \,.
\end{IEEEeqnarray*}
To upper bound the first sum, we use Chebyshev's inequality:
\begin{IEEEeqnarray*}{rCl}
\mathrm{P}\left(Y_i<a_i\right)& = & \mathrm{P}\left(\mathrm{E}[Y_i] -Y_i \ge \mathrm{E}[Y_i]-a_i\right)   \le   \mathrm{P}\left(\left|\mathrm{E}[Y_i] -Y_i \right| \ge \frac{m\mu_i}{n}-\frac{\rho m \mu_i + \rho m^{0.75}}{n}\right)\\
	&=& \mathrm{P}\left(\left|\mathrm{E}[Y_i] -Y_i\right| \ge \frac{(1-\rho) m \mu_i - \rho m^{0.75}}{n\cdot \sigma_{Y_i}}\sigma_{Y_i}\right)   \le  \frac{n^2 \sigma_{Y_i}^2}{\left((1-\rho) m \mu_i - \rho m^{0.75}\right)^2} \\
	&\le& \frac{n^2 \frac{m}{n} }{m^2\left((1-\rho)\mu_i - \frac{\rho}{m^{0.25}}\right)^2} = 
	\frac{n}{m\left((1-\rho)\mu_i - \frac{\rho}{m^{0.25}}\right)^2} = O\left(\frac{n}{m}\right)\,,
\end{IEEEeqnarray*}
and thus
\[\sum_{i=1}^{n} \mathrm{P}(Y_i<a_i) = O\left(\frac{n^2}{m}\right)\,.\]
On the other hand, using Hoeffding's inequality,
\begin{IEEEeqnarray*}{rCl}
\mathrm{P}\left( \frac{\rho\cdot v_i(M)}{n}>a_i\right) & = & 
        \mathrm{P}\left( \frac{v_n(M)}{m} -\mu_i > \frac{n\cdot a_i}{\rho\cdot m}-\mu_i\right)   =   \mathrm{P}\left( \frac{v_n(M)}{m} -\mu_i > \frac{\rho m \mu_i + \rho m^{0.75} - \rho m \mu_i}{\rho\cdot m}\right)\\
&\le& e^{-2m\left( \frac{1}{m^{0.25}} \right)^2} \le e^{-2\sqrt{m}}= o\left(\frac{1}{m}\right)\,,
\end{IEEEeqnarray*}
and thus
\[\sum_{i=1}^{n} \mathrm{P}\left( \frac{\rho\cdot v_i(M)}{n}>a_i \right) = o\left(\frac{n}{m}\right)\,.\]

\noindent Finally, we have
\begin{IEEEeqnarray*}{x+C+x*}
& \mathrm{P}\left( \forall i, \ Y_i \ge \frac{ \rho\cdot v_i(M)}{n}\right) \ge 1- \sum_{i=1}^{n} \mathrm{P}(Y_i<a_i) - \sum_{i=1}^{n} \mathrm{P}\left( \frac{\rho\cdot v_i(M)}{n}>a_i \right)  = 1-O\left(\frac{n^2}{m}\right) \,. &\qedhere
\end{IEEEeqnarray*}

\end{proof}

\section{Conclusions}
\label{sec:concl}

We embarked on the existence of truthful mechanisms for approximate maximin share allocations. In doing so, we considered three models regarding the information that a mechanism elicits from the players, and studied their power and limitations. Quite surprisingly, we have exhibited cases with two players where the best possible truthful approximation is achieved by using only ordinal information.

Our work leaves several interesting questions for future research. 
A great open problem is whether there exist better truthful mechanisms in the cardinal model that explicitly take into account the players' valuation functions rather than just ordinal information. Understanding the power of ordinal information is an important direction in our view, which is along the same lines as the work of \cite{AS16} for other optimization problems. The fact that for $n=2$ and $m\in\{4,5\}$, the ordinal and the cardinal model are equivalent was rather surprising to us and it remains open to explore how far this equivalence can go. 
Another more general question is  to tighten the upper and lower bounds obtained here; especially for a large number of players, these bounds are quite loose.

\bibliographystyle{plain}
\bibliography{fairdivrefs}

\begin{thebibliography}{10}

\bibitem{AMNS15}
G.~Amanatidis, E.~Markakis, A.~Nikzad, and A.~Saberi.
\newblock Approximation algorithms for computing maximin share allocations.
\newblock In {\em Automata, Languages, and Programming - 42nd International
  Colloquium, {ICALP} 2015, Proceedings, Part {I}}, pages 39--51, 2015.

\bibitem{AS16}
E.~Anshelevich and S.~Sekar.
\newblock Blind, greedy, and random: Algorithms for matching and clustering
  using only ordinal information.
\newblock In {\em 30th AAAI Conference on Artificial Intelligence, {AAAI}
  2016,}, 2016.

\bibitem{BL11}
S.~Bouveret and J.~Lang.
\newblock A general elicitation-free protocol for allocating indivisible goods.
\newblock In {\em Proceedings of the 22nd International Joint Conference on
  Artificial Intelligence, {IJCAI} 2011}, pages 73--78, 2011.

\bibitem{BoLa14}
S.~Bouveret and J.~Lang.
\newblock Manipulating picking sequences.
\newblock In {\em {ECAI} 2014 - 21st European Conference on Artificial
  Intelligence}, pages 141--146, 2014.

\bibitem{BL14}
S.~Bouveret and M.~Lema{\^{\i}}tre.
\newblock Characterizing conflicts in fair division of indivisible goods using
  a scale of criteria.
\newblock In {\em International conference on Autonomous Agents and Multi-Agent
  Systems, {AAMAS} '14}, pages 1321--1328, 2014.

\bibitem{BF02}
S.~J. Brams and P.~Fishburn.
\newblock Fair division of indivisible items between two people with identical
  preferences.
\newblock {\em Social Choice and Welfare}, 17:247--267, 2002.

\bibitem{BK05}
S.~J. Brams and D.~King.
\newblock Efficient fair division - help the worst off or avoid envy.
\newblock {\em Rationality and Society}, 17(4):387--421, 2005.

\bibitem{BT96}
S.~J. Brams and A.~D. Taylor.
\newblock {\em Fair Division: from Cake Cutting to Dispute Resolution}.
\newblock Cambridge University press, 1996.

\bibitem{BT00}
S.~J. Brams and A.~D. Taylor.
\newblock {\em The Win-win Solution: Guaranteeing Fair Shares to Everybody}.
\newblock W. W. Norton \& Company, 2000.

\bibitem{Budish11}
E.~Budish.
\newblock The combinatorial assignment problem: Approximate competitive
  equilibrium from equal incomes.
\newblock {\em Journal of Political Economy}, 119(6):1061--1103, 2011.

\bibitem{CKKK09}
I.~Caragiannis, C.~Kaklamanis, P.~Kanellopoulos, and M.~Kyropoulou.
\newblock On low-envy truthful allocations.
\newblock In {\em First International Conference on Algorithmic Decision
  Theory, {ADT} 2009}, pages 111--119, 2009.

\bibitem{CLPP13}
Y.~Chen, J.~K. Lai, D.~C. Parkes, and A.~D. Procaccia.
\newblock Truth, justice, and cake cutting.
\newblock {\em Games and Economic Behavior}, 77(1):284--297, 2013.

\bibitem{KNW13}
T.~Kalinowski, N.~Narodytska, and T.~Walsh.
\newblock A social welfare optimal sequential allocation procedure.
\newblock In {\em Proceedings of the Twenty-Third International Joint
  Conference on Artificial Intelligence}, IJCAI '13, pages 227--233, 2013.

\bibitem{KNWX13}
T.~Kalinowski, N.~Narodytska, T.~Walsh, and L.~Xia.
\newblock Strategic behavior when allocating indivisible goods sequentially.
\newblock In {\em Proceedings of the Twenty-Seventh {AAAI} Conference on
  Artificial Intelligence, July 14-18, 2013, Bellevue, Washington, {USA.}},
  2013.

\bibitem{KC71}
D.~Kohler and R.~Chandrasekaran.
\newblock A class of sequential games.
\newblock {\em Operations Research}, 19(2):270--277, 1971.

\bibitem{KPW16}
D.~Kurokawa, A.~D. Procaccia, and J.~Wang.
\newblock When can the maximin share guarantee be guaranteed?
\newblock In {\em 30th AAAI Conference on Artificial Intelligence, {AAAI}
  2016,}, 2016.

\bibitem{LMMS}
R.~J. Lipton, E.~Markakis, E.~Mossel, and A.~Saberi.
\newblock On approximately fair allocations of indivisible goods.
\newblock In {\em ACM Conference on Electronic Commerce (EC)}, pages 125--131,
  2004.

\bibitem{MP11}
E.~Markakis and C.{-}A. Psomas.
\newblock On worst-case allocations in the presence of indivisible goods.
\newblock In {\em 7th Workshop on Internet and Network Economics (WINE 2011)},
  pages 278--289, 2011.

\bibitem{Moulin90}
H.~Moulin.
\newblock Uniform externalities: Two axioms for fair allocation.
\newblock {\em Journal of Public Economics}, 43(3):305--326, 1990.

\bibitem{Procaccia14-survey}
A.~D. Procaccia.
\newblock Cake cutting algorithms.
\newblock In F.~Brandt, V.~Conitzer, U.~Endriss, J.~Lang, and A.D. Procaccia,
  editors, {\em Handbook of Computational Social Choice}, chapter~13. Cambridge
  University Press, 2015.

\bibitem{PW14}
A.~D. Procaccia and J.~Wang.
\newblock Fair enough: guaranteeing approximate maximin shares.
\newblock In {\em {ACM} Conference on Economics and Computation, {EC} '14},
  pages 675--692, 2014.

\bibitem{RW98}
J.~M. Robertson and W.~A. Webb.
\newblock {\em Cake Cutting Algorithms: be fair if you can}.
\newblock AK Peters, 1998.

\bibitem{Woeginger97}
G.~Woeginger.
\newblock A polynomial time approximation scheme for maximizing the minimum
  machine completion time.
\newblock {\em Operations Research Letters}, 20:149--154, 1997.

\end{thebibliography}

\appendix

\section{Missing Proofs from Sections \ref{sec:rankings} and \ref{sec:val-rank}}

\vspace{7pt}
\begin{proof}[\textbf{Alternative Proof of Corollary \ref{cor:ordinal-1/2}}]
It suffices to prove the statement when we only have 4 items $a, b, c, d$. Notice that since the mechanism takes as input the players' rankings, to achieve a $(1/2+\varepsilon)$-approximation for $\varepsilon \in (0,1/2]$, there are some allocations that are not feasible. Specifically, the mechanism cannot allocate only one item to one player and three items to the other since there is the possibility that the first player values the items equally. 
Moreover, the mechanism cannot give to $p_i$ the bundles $B_i(2,4)$ or $B_i(3,4)$. To see this, let $\mathbf v_i=[2,1,1,0]$; both $B_i(2,4)$ and  $B_i(3,4)$ have a total value of $1$ while $\mms_i= 2$. Thus, the only feasible bundles that ensure a  $(1/2+\varepsilon)$-approximation for a player in the ordinal model, are $B_i(1,2)$, $B_i(1,3)$, $B_i(1,4)$ and $B_i(2,3)$.  Now suppose that there exists a deterministic truthful mechanism that achieves the desired approximation ratio and consider the following profiles:
\\
\textit{Profile 1}:  $\{[a \succeq_1 b \succeq_1 c \succeq_1 d],  [a \succeq_2 b \succeq_2 c \succeq_2 d]\}$. 
There are two feasible allocations, i) $(\{a,d\}, \{b,c\})$ and ii) $(\{b,c\}, \{a,d\})$. W.l.o.g. we may assume  that the mechanism outputs allocation i). The analysis in the other case is symmetric. 
\\
\textit{Profile 2}:  $\{[a \succeq_1 b \succeq_1 c \succeq_1 d],  [a \succeq_2 b \succeq_2 d \succeq_2 c]\}$.
There are two feasible allocations i) $(\{a,c\}, \{b,d\})$ and ii) $(\{b,c\}, \{a,d\})$.  Allocation ii) is not possible, since in Profile 1 $p_2$  could play $[a \succeq_2 b \succeq_2 d \succeq_2 c]$  and get items $\{a,d\}$ which --depending on $p_2$'s valuation function-- can have strictly more value than $\{b, c\}$. 
Thus, the mechanism outputs allocation i).
\\
\textit{Profile 3}:  $\{[a \succeq_1 b \succeq_1 c \succeq_1 d],  [b \succeq_2 d \succeq_2 c \succeq_2 a]\}$.
There are three feasible allocations, i) $(\{a,c\}, \{b,d\})$, ii) $(\{a,b\}, \{c,d\})$ and iii) $(\{a,d\}, \{b,c\})$. Allocations ii) and iii) are not possible, since in Profile 3 $p_2$ could play  $[a \succeq_2 b \succeq_2 d \succeq_2 c]$ and get items $\{b, d\}$ which might have more value than $\{c,d\}$ or $\{b,c\}$. Thus, the mechanism outputs allocation i).
\\
\textit{Profile 4}:  $\{[b \succeq_1 a \succeq_1 c \succeq_1 d],  [b \succeq_2 d \succeq_2 c \succeq_2 a]\}$.
There are two feasible allocations i) $(\{a,c\}, \{b,d\})$ and ii) $(\{a,b\}, \{c,d\})$.  Allocation ii) is not possible, since in Profile 3 $p_1$  could play $[b \succeq_1 a \succeq_1 c \succeq_1 d]$  and get items $\{a, b\}$ which might have more value than $\{a, c\}$. 
Thus, the mechanism outputs allocation i).
\\
\textit{Profile 5}:  $\{[b \succeq_1 a \succeq_1 c \succeq_1 d],  [b \succeq_2 a \succeq_2 c \succeq_2 d]\}$.
There are two feasible allocations i) $(\{a,c\}, \{b,d\})$ and ii) $(\{b,d\}, \{a,c\})$.  Allocation ii) is not possible, since in Profile 5 $p_2$  could play $[b \succeq_2 d \succeq_2 c \succeq_2 a]$  and get items $\{b,d \}$ which might have more value than $\{a, c\}$. 
Thus, the mechanism outputs allocation i).
\\
\textit{Profile 6}:  $\{[b \succeq_1 a \succeq_1 c \succeq_1 d],  [a \succeq_2 b \succeq_2 c \succeq_2 d]\}$.
There are two feasible allocations i) $(\{b,c\}, \{a,d\})$ and ii) $(\{b,d\}, \{a,c\})$.  Allocation ii) is not possible, since in Profile 5 $p_2$  could play $[a \succeq_2 b \succeq_2 c \succeq_2 d]$  and get items $\{a,c\}$ which might have more value than $\{b, d\}$. Allocation i) is not possible either, since in Profile 1 $p_1$  could play $[b \succeq_1 a \succeq_1 c \succeq_1 d]$  and get items $\{b, c\}$ which might have more value than $\{a,d\}$.
Thus, there is no possible allocation in this profile, which is a contradiction.
\end{proof}

\vspace{10pt}
\begin{proof}[\textbf{Proof of Theorem \ref{thm:5/6} for \boldmath${m=5}$}]
We shall study two cases, of  five profiles each. In both cases and for all profiles, the ordering of the items is $a\succeq_i b\succeq_i c\succeq_i d\succeq_i e $  for both players. Suppose that there exists a deterministic truthful mechanism that achieves a $(5/6+\varepsilon)$-approximation for some $\varepsilon \in (0,1/5]$.
\\
\textit{Profile 1:} $\{[1, 1, 1, 1, 1],[1, 1, 1, 1, 1]\}$. Here,  $\mms_i =2$ for $i\in\{1, 2\}$. The mechanism must give to each player items of value greater than $5/6\cdot 2 = 1.67$. Thus each player has to receive at least two items. W.l.o.g. we may assume that $p_1$ gets three items and $p_2$ gets two items (the analysis in the other case is symmetric). There are two cases to be considered depending on who gets item $a$:
\\
\textbf{Case 1:} $p_1$ gets item $a$ in Profile 1.
\\
\textit{Profile 2:} $\{[1, 0.25, 0.25, 0.25, 0.25],[1,1,1,1,1]\}$. Here $\mms_1 =1$  and $\mms_2 =2$. The mechanism must give to $p_1$ a total value greater than $5/6 \cdot 1=0.83$ and to $p_2$ a total value greater than $5/6\cdot 2= 1.67$. Notice now that $p_2$ has to get at least two items, thus $p_1$ has to get item $a$. In addition, $p_1$ gets three items, or she could play $\mathbf v_1'=[1, 1, 1, 1, 1]$ and end up strictly better. So, we  conclude that in this profile $p_1$ gets three items, with $a$ among them, while $p_2$ gets two items.
\\
\textit{Profile 3:} $\{[1, 0.25, 0.25, 0.25, 0.25],[1, 0.25, 0.25, 0.25, 0.25]\}$. Here $\mms_i =1$ for $i\in\{1, 2\}$, so in order to achieve something greater than $5/6\cdot 1= 0.83$, there are only two feasible allocations: i) $p_2$ gets $a$ and $p_1$ gets the remaining items and ii) $p_1$ gets $a$ and $p_2$ gets the remaining items. Notice that allocation ii) is not possible, since $p_2$ in Profile 2 could play $\mathbf v_2'= [1,0.25,0.25,0.25,0.25]$  and be better off. So, here the mechanism outputs allocation i).
\\
\textit{Profile 4:} $\{[1, 1, 1, 1, 1],[1,0.25,0.25,0.25,0.25]\}$. Here $\mms_1 =2$  and $\mms_2 =1$. The mechanism must give to $p_1$ a total value greater than $5/6 \cdot 2=1.67$ and to $p_2$ a total value greater than $5/6\cdot 1= 0.83$. Notice now that $p_1$ must get four items, or otherwise in Profile 4 she could play $\mathbf v_1'=[1, 0.25, 0.25, 0.25, 0.25]$ and end up strictly better. Thus $p_2$ can get only one item and this must be $a$, since it is the only item that achieves the desired ratio.
\\
\textit{Profile 5:} $\{[1, 1, 1, 1, 1],[0.55, 0.45, 0.34, 0.34, 0.34]\}$. Here $\mms_1 =2$  and $\mms_2 =1$. The mechanism must give to $p_1$ a total value greater than $5/6 \cdot 2=1.67$ and to $p_2$ a total value greater than $5/6\cdot 1= 0.83$. First notice that $p_1$ must get at least two items. Given that, let us now examine the feasible bundles for $p_2$: i) $\{a,b\}$, ii) $\{a,c\}$, iii) item $a$ and two more items, iv) item $b$ and two more items, excluding item $a$, and v) $\{c,d,e\}$. Bundles i), ii), iii) are not possible, since $p_2$ in Profile 4 could play $\mathbf v_2'= [0.55,0.45,0.34,0.34,0.34]$ and end up strictly better. Allocations iv), v) are not possible either, since $p_2$ in Profile 1 could  again play $\mathbf v_2'=[0.55,0.45,0.34,0.34,0.34]$ and end up strictly better. Thus, we can conclude that there is no possible allocation in this profile, which leads to a contradiction.
\\
\textbf{Case 2:} $p_2$ gets item $a$ in Profile 1.
\\
\textit{Profile 2:} $\{[1, 1, 1, 1, 1],[1,0.25,0.25,0.25,0.25]\}$. Here $\mms_1 =2$  and $\mms_2 =1$. The mechanism must give to $p_1$ a total value greater than $5/6 \cdot 1=0.83$ and to $p_2$ a total value greater than $5/6\cdot 2= 1.67$. Notice now that $p_2$ must get two items, including item $a$. Indeed, if she gets three items, then in Profile 1 she could play $\mathbf v_2'=[1,0.25,0.25,0.25,0.25]$ and end up better off. If she gets one item (item $a$), then in Profile 2 she could play $\mathbf v_2''=[1,1,1,1,1]$ and end up strictly better. So, we  conclude that here $p_1$ gets three items and $p_2$ gets two items, with $a$ among them.
\\
\textit{Profile 3:} $\{[1, 0.25, 0.25, 0.25, 0.25],[1,0.25,0.25,0.25,0.25]\}$.  Here $\mms_i =1$ for $i\in\{1, 2\}$, so in order to achieve something higher than $5/6\cdot 1= 0.83$, there are only two feasible allocations: i) $p_2$ gets item $a$ and $p_1$ gets the remaining items, and ii) $p_1$ gets item $a$ and $p_2$ gets the remaining items. Notice that allocation i) is not possible, since  $p_1$ in Profile 2 could play $\mathbf v_1'=[1, 0.25, 0.25, 0.25, 0.25]$ and end up strictly better. So, the mechanism outputs allocation ii).
\\
\textit{Profile 4:} $\{[1, 0.25, 0.25, 0.25, 0.25],[0.5,0.5,0.35,0.33,0.32]\}$. Here $\mms_i =1$ for $i\in\{1, 2\}$, so the mechanism must give to both players bundles of value greater than $5/6\cdot 1= 0.83$. Notice that $p_2$ must get at least two items and thus, $p_1$ must get item $a$ to achieve the desired ratio. However, if $p_2$ gets less than four items, then in Profile 4 she  could play $\mathbf v_2'=[1,0.25,0.25,0.25,0.25]$  and end up strictly better. Thus, we can conclude that $p_1$ gets item $a$ and $p_2$ gets the remaining items.
\\
\textit{Profile 5:} $\{[0.5, 0.2, 0.2, 0.2, 0.1],[0.5,0.5,0.35,0.33,0.32]\}$. Here $\mms_1 =0.6$  and $\mms_2 =1$. The mechanism must give to $p_1$ a total value greater than $5/6 \cdot 0.6=0.5$ and to $p_2$ a total value greater than $5/6\cdot 1= 0.83$. First notice that in order to achieve the desired ratio $p_1$ must get at least two items. However, she can not get a proper superset of $\{a\}$, since in Profile 4  she could play $\mathbf v_1'=[0.5, 0.2, 0.2, 0.2, 0.1]$ and end up strictly better. The only remaining feasible bundles for $p_1$ are: i)  $\{b,c,d\}$ and ii)  $\{b,c,d,e\}$. It is easy to see that none of these  are possible, since for the allocation implied by i) $p_2$ gets a total value of $0.82$, and for the allocation implied by  ii) $p_2$ gets a total value of $0.5$. Thus, there is no possible allocation in this profile, which leads to a contradiction.
\end{proof}

\end{document}